\documentclass[11pt]{article}
\usepackage[journal=jquant,stage=submission,logging=stdout]{jsty3-author}

\usepackage{amsfonts}
\usepackage{amsmath,amssymb}
\usepackage{multirow}
\usepackage{mathtools}
\usepackage{empheq}
\usepackage{tensor}
\usepackage{cancel}
\usepackage{enumitem}
\usepackage{url}
\usepackage{simplewick}
\usepackage{placeins}

\usepackage{bbm}
\usepackage{enumitem}
\usepackage{slashed}
\usepackage{xcolor}
\usepackage{quiver}

\definecolor{THc}{rgb}{0.9,0.3,0.2}

\def\logb{\mathrm{log}_2}

\def\HH{\mathcal{H}}

\def\ZZ{\mathbb{Z}}

\usepackage{bm}
\usepackage[colorlinks=true,citecolor=blue,linkcolor=blue,urlcolor=blue]{hyperref}
\usepackage{dsfont}
\usepackage{changepage}
\usepackage{array}
\usepackage{soul}
\usepackage[capitalise]{cleveref}
\crefname{section}{Sec.}{Secs.}
\Crefname{section}{Sec.}{Secs.}
\usepackage{xifthen}
\usepackage{xargs}
\usepackage{hhline}
\usepackage{pifont}

\usepackage{amsthm}
\theoremstyle{definition}

\theoremstyle{plain}

\newtheorem{thm}{Theorem}

\newtheorem{lem}{Lemma}

\newcommand{\ba}{\begin{align}}
\newcommand{\ea}{\end{align}}
\newcommand{\be}{\begin{equation}}
\newcommand{\ee}{\end{equation}}

\DeclareMathAlphabet{\mymathbb}{U}{BOONDOX-ds}{m}{n}

\usepackage{physics}
\usepackage{braket}
\usepackage[normalem]{ulem}

\title{Long-range nonstabilizerness of topologically encoded states from mutual information}

\author{a}{David Aram Korbany}{}{}
\author{b,c,d}{Tyler D. Ellison}{}{}
\author{e}{David T. Stephen}{}{}
\author*{a}{Lorenzo Piroli}{}{}

\affiliation{a}{Dipartimento di Fisica e Astronomia, Universit\`a di Bologna and INFN, Sezione di Bologna, via Irnerio 46, I-40126 Bologna, Italy}
\affiliation{b}{Perimeter Institute for Theoretical Physics, Waterloo, Ontario N2L 2Y5, Canada}
\affiliation{c}{Department of Physics and Astronomy, Purdue University, West Lafayette, IN, 47907}
\affiliation{d}{Purdue Quantum Science and Engineering Institute, Purdue University, West Lafayette, IN, 47907}
\affiliation{e}{Quantinuum, 303 S. Technology Ct., Broomfield, Colorado 80021, USA}

\abstract{We study long-range nonstabilizerness (LRN), namely the obstruction to remove nonstabilizerness with shallow-depth local quantum circuits. In one-dimensional settings, the mutual information between disconnected spatial regions has proven to be a powerful tool to diagnose LRN. In this work, we focus on encoded states of two-dimensional topologically-ordered systems, and explore the ability of the mutual information to serve as a diagnostic of LRN. Focusing on the concrete setting of lattice models defined on a torus, we show that information about LRN can be gained from the analysis of the mutual information between non-overlapping regions containing non-contractible loops, and of the change of such mutual information under modular real-space transformations. We exemplify this idea in the toric code and the non-abelian string-net model with doubled Fibonacci topological order. In the former case, we show that the mutual information provides a full classification, certifying LRN for all encoded non-stabilizer states. In the latter case, instead, our approach does not lead to a full classification, as it detects LRN for all states except from a finite subset with special transformation properties under the modular group. Finally, we discuss how our results on LRN constrain the logical gates that can be implemented fault-tolerantly on the torus.}

\begin{document}
\nolinenumbers

\section{Introduction}

Stabilizer states have taken a prominent role in many-body physics, capturing nontrivial patterns of entanglement, despite having a simple algebraic description. In particular, stabilizer states provide explicit wave functions for topological phases of matter~\cite{kitaev2003fault, Landahl2002memory, Bombin2006colorcode, Haah2021classification, ellison2022pauli}, symmetry-protected phases~\cite{ellison2022pauli, Raussendorf2006oneway, Yoshida2016higherform, Devakul2019fractalspt, stephen2020sspt3d}, and other exotic orders~\cite{Chamon2005Chamonmodel, Haah2011Haahcode, Vijay2016xcube}. Moreover, given the development of quantum hardware for quantum error correction, they are especially natural to simulate on near-term devices~\cite{Google2023surfacecode, Iqbal2024toric, Iqbal2025qutrit, Zhu2023Nishimoriscat, Jiang2026clusters}.

In parallel, there has been growing interest in the notion of nonstabilizerness~\cite{white2021conformal, ellison2021symmetry,sarkar2020characterization,sewell2022mana, oliviero2022magic,liu2022many,haug2022quantifying, tarabunga2023many,tarabunga2024critical,falcao2024nonstabilizerness,tarabunga2024nonstabilizerness2}, which quantifies the degree to which a given state differs from a stabilizer state. In many-body settings, nonstabilizerness has been studied both for systems in~\cite{white2021conformal,ellison2021symmetry,sarkar2020characterization,liu2022many,haug2022quantifying,tarabunga2023many,tarabunga2024critical,falcao2024nonstabilizerness} and out of equilibrium~\cite{leone2022stabilizer,leone2021quantum,haferkamp2022random,haug2024probing,lopez2024exact,turkeshi2024magic,dowling2024magic}. There has also been a concerted effort to understand its interplay with entanglement~\cite{tirrito2024quantifying,gu2024magic,fux2023entanglement,frau2024nonstabilizerness,bejan2024dynamical,tarabunga2024magictransition,iannotti2025entanglement,hallam2026spectral} and quantum chaos~\cite{lami2024quantum,turkeshi2023measuring,leone2021quantum,leone2023nonstabilizerness,leone2023phase,garcia2023resource,turkeshi2025pauli,bera2025non}.

Throughout the study of nonstabilizerness, it is important to keep in mind that, contrary to several properties of interest in many-body systems, it is a basis dependent quantity. The presence of spatial locality, however, allows for a more refined notion of nonstabilizerness: one can ask which aspects of nonstabilizerness persist under local changes of basis. This motivates the introduction of long-range nonstabilizerness~\cite{white2021conformal,ellison2021symmetry,sarkar2020characterization,korbany2025long,wei2025long,parham2025quantum} (LRN), which is an obstruction to removing nonstabilizerness (at least approximately) through a local change of basis, as modeled by a shallow-depth quantum circuit~\cite{chen2010local}.\footnote{See also Refs.~\cite{leone2022stabilizer,sarkar2020characterization,tarabunga2023many,frau2024stabilizer, lopez2024exact,tarabunga2024magic,nehra2025topological,catalano2025resource,collura2026nonlocal,iannotti2026non} for different notions of non-local nonstabilizerness.}

Operationally, LRN probes an obstruction to efficiently simulating a state within the stabilizer formalism. This is to say that, if a state does not have LRN, then one can simulate it efficiently by first preparing a stabilizer state then applying a shallow-depth circuit. This is particularly relevant in the era of noisy intermediate-scale quantum devices~\cite{preskill2018quantum}, wherein it is often easier to prepare a stabilizer state on hardware. This is reminiscent of an intrinsic sign problem~\cite{smith2020intrinsic, Golan2020signproblem, ellison2021symmetry}, which informs us about a fundamental obstruction to simulating states using Monte-Carlo sampling techniques.\footnote{However, see Ref.~\cite{shackleton2026twistedquantumdoublessign}.}

At this early stage, our understanding of LRN is limited. So far, rigorous results have been obtained for one-dimensional systems using matrix-product state (MPS) techniques~\cite{korbany2025long} and in higher dimensions for specific topologically ordered states~\cite{wei2025long}. In general, however, a full characterization and theory of LRN in many-body systems is still lacking.

To further solidify our understanding of LRN beyond one-dimensional systems, it is natural to consider topological phases of matter in two spatial dimensions. These are conventionally defined by equivalence classes of states related to one another by shallow-depth circuits \cite{zeng2015quantum,kim2024classifying}. This would suggest that LRN is a property of the topolgical phase, and LRN would indicate that there is no stabilizer state representative of the phase. It is widely believed, for example, that non-abelian topological orders are incompatible with stabilizer states~\cite{potter2016symmetry, wei2025long}, thus necessitating LRN. 

This definition of topological order, however, requires some care when the system is placed on a manifold with nontrivial topology, such as a torus. The topological order may then host degenerate ground states, that cannot be mapped to one another with a shallow circuit. Thus, on a torus, LRN need not be a property of the phase. For example, while the $\ZZ_2$ toric code (TC) is a stabilizer code, it is known that there exist encoded states, i.e., ground states, featuring LRN~\cite{wei2025long}.

For one-dimensional systems, the mutual information between disconnected regions has proven to be a simple and powerful tool to diagnose LRN~\cite{korbany2025long,parham2025quantum}. In particular, Ref.~\cite{korbany2025long} developed an elementary approach to certify LRN of MPS, based on the observation that the mutual information between separated regions in certain MPS is invariant under shallow quantum circuits. Since the mutual information of stabilizer states is an integer~\cite{hamma2005bipartite,hamma2005ground}, non-integer values signal the presence of LRN~\cite{korbany2025long}. 

In this work, we explore the power of this simple approach in two-dimensional topologically ordered systems. We consider, in particular, ground states of string-net models~\cite{levin2005string,Lin2021Generalized}, but emphasize that our results apply more broadly, given the relation between string-net states and ground states of gapped two-dimensional phases proven in Ref.~\cite{kim2024classifying}. We note that this approach to studying LRN in topological order is akin to the first proof of long-range entanglement in topological orders, which also focused on systems on a torus~\cite{bravyi2006lieb}.

Our diagnostic comes from considering the mutual information between two well-separated regions each containing non-contractible loops around the torus. We then observe that the two regions can be transformed by modular transformations---mappings of the torus to itself, described in more detail in the text---and the mutual information between the transformed regions can be obtained by applying modular $S$ and $T$ matrices to the ground state~\cite{smith2020intrinsic}. Here, the modular $S$ and $T$ matrices are determined by the data of the underlying anyon theory. Thus, this allows us to directly relate the anyon data to the mutual information on transformed spaces. Combined with the fact that the mutual information of stabilizer states is an integer, this gives us a means of detecting and classifying the LRN.

We demonstrate the above programme for the $\ZZ_2$ TC and the string-net model with doubled Fibonacci topological order. In the former case, our approach provides a full classification of LRN, showing that all encoded non-stabilizer states display LRN. In the latter case, instead, our approach does not lead to a full classification, since it unable to detect LRN for a finite subset of states with special transformation properties under the modular group. 

We emphasize that some of the findings in this work were obtained previously using a different approach~\cite{wei2025long}.
In particular, Ref.~\cite{wei2025long} established a full classification of LRN of the TC ground states by leveraging the Bravyi-K\"onig theorem~\cite{bravyi2013classification}. This applies only to Pauli stabilizer codes, so to make statements about more general topological orders, Ref.~\cite{wei2025long} proved a condition for LRN dependent on the dimension of the ground-state subspace. This applies to, for example, the doubled Fibonacci topological order on manifolds of genus $g\geq 2$. Notably, this does not establish LRN on a torus, which is the focus of this work.

Our main contribution lies in the realization that, in some cases, LRN can be diagnosed through elementary techniques, based exclusively on the information-theoretic properties of the wave functions. We envision that the ideas underlying our work could be useful in more general situations beyond topological orders on a torus, including systems with boundaries, topological defects, and in higher dimensions.

The rest of this work is organized as follows. We begin in Sec.~\ref{sec:preliminaries} with an introduction to stabilizer states and an explicit definition of LRN. Our approach to detect and characterize LRN is then developed in Sec.~\ref{sec:tc_1} for the special case of the $\ZZ_2$ TC, which allows us to introduce the main ideas in a concrete, simplified setting. In Sec.~\ref{sec:non_abelian}, we extend our approach to non-abelian topological orders, focusing on the doubled Fibonacci phase. 
In Sec.~\ref{sec:open_ends} we discuss the typicality and limitations of the examples, and then briefly show applications of our findings to the problem of classifying fault-tolerant gates. Finally, our conclusions are presented in Sec.~\ref{sec:outlook}.

\section{Preliminaries}
\label{sec:preliminaries}

To get started, we review stabilizer states. We put a particular emphasis on the integer quantization of their entanglement entropy, which is central to the arguments in this work. We then give a more rigorous definition of LRN.

\subsection{Stabilizer states}

We consider lattice models on $N$ qubits, with Hilbert space $\mathcal{H}_N=\otimes_{j=1}^N\mathcal{H}_j$, where $\mathcal{H}_j\simeq \mathbb{C}^2$. We will assume that the qubits are arranged over a regular lattice of dimension $D=2$. We  denote by $\mathcal{P}_N$ the set of all $N$-qubit Pauli strings including a global phase $e^{i\phi} \in \{\pm 1,\pm i\}$. A Pauli string is a product of $N$ Pauli matrices, denoted by $X_j$, $Y_j$, $Z_j$, and $\mathbbm{1}_j$ (the identity). We will also denote the local computational basis by $\{\ket{0},\ket{1}\}$. 

One can then define stabilizer states as follows. First, for a state $\ket{\psi}$, we introduce the stabilizer group 
\begin{align}
\mathrm{Stab}(\ket{\psi}) &= \{P\in \mathcal{P}_N:P\ket{\psi} = \ket{\psi}\}\,.
\end{align}
Note that $\mathrm{Stab}(\ket{\psi})$ is an abelian sub-group of $\mathcal{P}_N$, and that its elements can only have global phases $\pm 1$. Then, we say that a state $\ket{S}$ is a stabilizer state if $|\mathrm{Stab}(\ket{S})| = 2^N$. In this case, we can write~\cite{klappenecker2002beyond}
\be\label{eq:stab_state}
\ket{S} \bra{S} = \frac{1}{2^N}\sum_{g \in \mathrm{Stab}(\ket{S})} g\,.
\ee
Therefore, the reduced density matrix over a region $R$, $\rho_R = \tr_{\bar{R}} \ket{S} \bra{S}$, is of the form
\be\label{eq:reduced_density_matrix}
\rho_R=\frac{1}{2^{|R|} }\sum_{h \in H_R} h\,,
\ee
where $H_R$ is the abelian subgroup of $\mathcal{P}_N$ containing the Pauli strings for which the trace in Eq.~\eqref{eq:stab_state} is nonzero, namely those of the form $g = h_R \otimes \mathbbm{1}_{\bar{R}}$.

From the expressions above, one can compute the bipartite entanglement entropy of a pure stabilizer state. Indeed, the von Neumann entropy $S(\rho) = -\tr[\rho \log_2(\rho)]$ of the reduced density matrix in Eq.~\eqref{eq:reduced_density_matrix} reads~\cite{hamma2005bipartite}
\begin{equation}\label{eq:von_neumann}
    S(\rho_R)=|R|-\log_{2}|H_R|\,,
\end{equation}
where we denoted by $|R|$ the number of qubits in $R$ and by $|H_R|$ the size of the subgroup $H_R$. In addition, $|H_R|=2^{\ell}$ for some integer $\ell$, so that $S(\rho_R)$ is an integer. As a consequence, the bipartite entanglement entropy of a pure stabilizer state is quantized. 

Furthermore, the mutual information between two disjoint regions $A$ and $B$ is an integral combination of entanglement entropies: $I_{A,B}(|\psi\rangle)=S(\rho_A)+S(\rho_B)-S(\rho_{AB})$. Therefore, the mutual information of a stabilizer state is also quantized as an integer. This observation will be at the basis of our characterization of LRN.

\subsection{Long-range nonstabilizerness}

Next, we recall the definition of LRN from Ref.~\cite{korbany2025long}. To this end, we consider sequences of states $\{\ket{\psi_N}\in \mathcal{H}_N\}_{N \in \mathbb{N}}$ for increasing $N$, and define the family of local quantum circuits (QCs) as the unitary operators $ Q_\ell =  V_{\ell} \ldots V_2 V_1$, where each ``layer'' $V_{n}$ contains arbitrary unitary gates acting on disjoint sets of two neighboring qubits. The integer $\ell$ is the depth of the circuit. Note that, similar to Ref.~\cite{wei2025long} but contrary to Ref.~\cite{parham2025quantum}, we require geometrically local gates. 

Following Ref.~\cite{korbany2025long}, we then say that a family of states $\{\ket{\psi_N}\}_{N \in \mathbb{N}}$ has short-range nonstabilizerness if for all $\varepsilon_0>0$ and all $\alpha>0$, there exists a local QC $Q_{D_N}$ of depth $D_N =O( {\rm polylog}(N))$ and a stabilizer state $\ket{S_N}$ such that for sufficiently large $N$
\be \label{eq:sm}
\Delta(Q_{D_N} \ket{\psi_N},\ket{S_N})\leq  \frac{\varepsilon_0}{ N^\alpha} = \varepsilon_N,
\ee
where $\Delta(\ket{\psi},\ket{\phi})=\sqrt{1-|\braket{\psi|\phi}|^2}$ is the trace distance. In other words, there exists a shallow-depth circuit $Q_{D_N}$ that maps $\ket{\psi_N}$ to a stabilizer state within an error $\varepsilon_N$. Otherwise, we say that $\{\ket{\psi_N}\}_N$ has LRN. Note that, compared to earlier definitions~\cite{ellison2021symmetry},  we allow for some error $\varepsilon_N$ that vanishes in the thermodynamic limit. As explained in Ref.~\cite{korbany2025long}, the error scaling is such that the states $Q_{D_N} \ket{\psi_N}$ and $\ket{S_N}$ can not be distinguished by the corresponding expectation values of extensive observables and their powers. 

\section{The toric code}
\label{sec:tc_1}

We begin this section by introducing the $\mathbb{Z}_2$ TC, and outline our strategy to detect and classify the LRN of its ground states (Sec.~\ref{sec:the_model}). Our approach relies on the so-called minimum-entropy states (MES)~\cite{zhang2012quasiparticle}, oftentimes also called minimally-entangled states. They are discussed in Sec.~\ref{sec:the_MES}, where we also review their relevant properties. We present and prove our main results in Secs.~\ref{sec:LRN_in_TC_GS} and~\ref{sec:full_classification_TC}.

\subsection{Overview of the problem}
\label{sec:the_model}

We consider a rectangular lattice and denote by $E$ and $V$ the set of edges and vertices, respectively. We place qubits on the edges, so that the total Hilbert space is $\mathcal{H}=\bigotimes_{e\in E} \HH_e$. The TC Hamiltonian~\citep{kitaev2003fault} is
\begin{align}\label{eq:toric_code_ham}
H &= -\sum_{v} A_v - \sum_p B_p\,,
\end{align}
where
\begin{align}
 A_v &=\bigotimes_{e \in v} Z_e\,,\qquad B_p =  \bigotimes_{e \in p} X_e\,.
\end{align}
Here, $X_e$ and $Z_e$ denote the Pauli matrices acting on the edge qubit $e$, while $e \in v$ ($e \in p$) denotes the edges incident on $v$ (surrounding the plaquette $p$). The sums are over all the vertices $v$ and plaquettes $p$, as depicted in Fig.~\ref{fig:tc1}. Every ground state $\ket{\psi}$ satisfies
\be A_v\ket{\psi} = B_p\ket{\psi}= \ket{\psi}.
\ee

\begin{figure}
\includegraphics[width =0.9\linewidth]{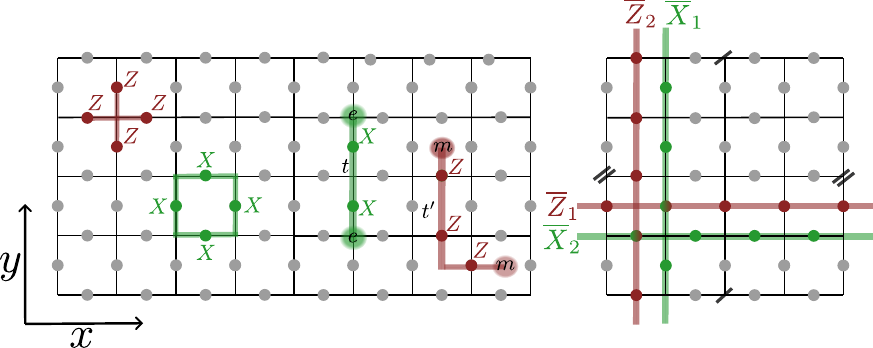}
\caption{Left: vertex, plaquette, and string operators in the TC. The string operators carry anyonic excitations at their endpoints. Right: string operators supported on non-contractible cycles on the primal and dual lattice of the torus. Here, we omit the $Z$ and $X$ operators at every edge, which are shown for the open strings on the left.}
\label{fig:tc1}
\end{figure}

We consider the model on the torus, where the dimension of the ground-state subspace is four~\cite{kitaev2003fault}. One such ground state is given by
\begin{equation}
    \ket{\psi_0}\propto \prod_{p} \left(\frac{\mathbbm{1}+B_p}{2}\right)\ket{0}^{\otimes N}\,.
\end{equation}
To define a basis for the ground-state subspace, we introduce string operators $\overline{X}_{i}$ and $\overline{Z}_i$, with $i=1,2$. They are, respectively, products of $X$ and $Z$ Pauli matrices along non-contractible loops wrapping around the torus, and are supported on qubits corresponding to a \emph{primal} and a \emph{dual} lattice, following the conventions of Fig.~\ref{fig:tc1}. A complete basis for the ground space is given by 
\begin{equation}\label{eq:stabilizer_gs}
    \ket{j_1,j_2}_{L}=\overline{X}^{j_1}_1\overline{X}^{j_2}_2\ket{\psi_0}\,,
\end{equation}
where $j_k=0,1$. We refer to the ground states in Eq.~\eqref{eq:stabilizer_gs} as the logical computational-basis states, as they represent two logical qubits in the computational basis encoded in the ground-state space. Note that they are stabilizer states and therefore do not display LRN. 

We are interested in determining the LRN of an arbitrary encoded state, i.e. ground state, which we denote by
{\begin{align}\label{eq:generic_state}
\ket{\psi} &= \alpha_1 \ket{00}_L + \alpha_2\ket{10}_L + \alpha_3 \ket{01}_L + \alpha_4 \ket{11}_L.
\end{align}

Because the ground states are locally indistinguishable, it is clear that detection of LRN necessarily requires the study of non-local correlations. In the rest of this section, we show that this task can be achieved by analyzing the mutual information 
\begin{equation}\label{eq:mutual_info}
    I_{A,B}(\ket{\psi})=S(\rho_A)+S(\rho_B)-S(\rho_{A\cup B})\,,
\end{equation}
where $A$ and $B$ are non-overlapping regions containing non-contractible loops around the torus, see Fig.~\ref{fig:ttt}.
Note that a similar mutual information, dubbed long-range mutual information, was studied in Ref.~\cite{jian2015long} as a diagnostic of topological order, see also Ref.~\cite{ritz-zwilling2024finite}.

As we state precisely and prove in Sec.~\ref{sec:LRN_in_TC_GS}, the mutual information in Eq.~\eqref{eq:mutual_info} serves a diagnostic of LRN. This statement is based on the following facts:
\begin{enumerate}
    \item For stabilizer states, the mutual information is an integer, i.e., $I_{A,B}\in \mathbb{N}$;
    \item For ground states, the mutual information is invariant under shallow quantum circuits. This fact is proven in Sec.~\ref{sec:LRN_in_TC_GS}.
\end{enumerate}
Therefore, any non-integer value of $I_{A,B}(\ket{\psi})$, for suitably chosen regions, immediately implies the LRN of the ground state $\ket{\psi}$. We note that the idea of using deviations of the mutual information from an integer value to detect LRN was first developed in Refs.~\cite{korbany2025long,parham2025quantum}.

One could wonder whether the converse statement is true, since, a priori, a state with LRN may display an integer $I_{A,B}(\ket{\psi})$. We show for the TC that this does not happen: namely, if $I_{A,B}(\ket{\psi})$ is an integer for all choices of regions $A$ and $B$ containing non-contractible loops, then $\ket{\psi}$ is a stabilizer state, and therefore does not have LRN. Putting it all together, we will obtain that the mutual information in Eq.~\eqref{eq:mutual_info} provides a faithful witness of LRN in the TC ground-state subspace.
It is useful to note that in Ref.~\cite{wei2025long}, in contrast to our approach, the LRN of encoded states in the TC was proven based on the Bravyi-K\"onig theorem~\cite{bravyi2013classification}. 

\subsection{Minimal-entropy states (MES)}
\label{sec:the_MES}

To carry out the programme outlined above, we need to introduce a convenient basis in the ground-state subspace, the MES~\cite{zhang2012quasiparticle}. To this end, we first recall the anyonic excitations of the TC. Elementary excitations, or anyons, are present if at least one of the constraints $A_v\ket{\psi} = \ket{\psi}$ or $B_p \ket{\psi} = \ket{\psi}$ is violated. When a vertex or plaquette constraint is violated, we call the excitation an electric charge $e$ or magnetic flux $m$, respectively. Excitations are created in pairs at the endpoints of string operators supported on open paths on the primal (dual) lattice as depicted in Fig.~\ref{fig:tc1}.  We further denote the trivial anyon as $I$ (no constraint violated) and the \emph{fusion} of an $e$ and $m$ particle as $f=e\times m$, i.e., there is both a vertex and plaquette violated. We can then label the anyons with an index $a\in \mathcal{A}=\{I,m,e,f\}$. 

MES are defined as the states with a well-defined anyon flux across a non-contractible loop~\cite{zhang2012quasiparticle}. To be precise, let us denote by $\gamma$ ($\gamma^\prime)$ a non-contractible loop on the primal (dual) lattice. We define the Wegner-Wilson-loop operators (WLO)
\begin{equation}
    \overline{X}(\gamma)=\bigotimes_{e\in \gamma} X_e\,,\qquad \overline{Z}(\gamma')=\bigotimes_{e\in \gamma^\prime} Z_e\,.
\end{equation}
For instance, when $\gamma=\gamma_y$ and $\gamma^\prime=\gamma'_y$ are vertical paths along the $y$ direction, then $\overline{X}(\gamma_y)$, $\overline{Z}(\gamma'_y)$ coincide with the logical Pauli operators $\overline{X}_1$, $\overline{Z}_2$, as shown in Fig.~\ref{fig:ttt}. These operators measure the flux corresponding to the anyons $e$ and $m$, respectively, with eigenvalue $1$, $(-1)$ signaling the absence (presence) of a flux for the corresponding anyon moving along the $x$-direction.

The MES are then the simultaneous eigenstates of the WLO $\overline{X}(\gamma)$ and $\overline{Z}(\gamma')$ for some non-contractible loop $\gamma$. For concreteness, consider the MES associated with the WLO $\overline{X}(\gamma_y)$ and $\overline{Z}(\gamma'_y)$. The simultaneous eigenstates are, in terms of the logical computational basis states,
\begin{equation}\label{eq:pm_j_l}
\ket{\pm,j}_L=\frac{(\ket{0,j}_L\pm \ket{1,j}_L)}{\sqrt{2}} \,. 
\end{equation}
Note that the stabilizers $A_v$, $B_p$, and $\pm \overline{X}(\gamma_y)$, $\pm \overline{Z}(\gamma'_y)$ uniquely define the MES for $\gamma_y$ and $\gamma_y'$. The $\pm$ on the WLO capture the four possible basis states of the TC ground-state subspace.

Since the four states in \eqref{eq:pm_j_l} are related to the four possible anyons $a\in \mathcal{A}$ moving along the $x$-direction, we can denote the states by $\{\ket{a;\gamma_y}\}_{a\in \mathcal{A}}$, where $a\in \mathcal{A}$ labels the anyon type. For instance, $|m;\gamma_y\rangle = \ket{+,1}_L$ is an eigenstate of $\overline{X}(\gamma_y)$ and $\overline{Z}(\gamma'_y)$, with eigenvalues $1$ and $-1$, respectively. The other cases are shown in Tab.~\ref{tab:placeholder}.

\begin{table}[t]
\centering
    \renewcommand{\arraystretch}{2}
    		\begin{tabular}{c|c} 
        Anyon label &Logical states \\
        \hline
        $\ket{I;\gamma_y}$ & 
         $\ket{+,0}_L = \frac{1}{\sqrt{2}}(\ket{0,0}_L + \ket{1,0}_L)  $  \\
 $\ket{e;\gamma_y} = \overline{Z}(\gamma_x)\ket{I;\gamma_y}$ & $\ket{-,0}_L = \frac{1}{\sqrt{2}}(\ket{0,0}_L -\ket{1,0}_L) $ \\
  $\ket{m;\gamma_y}=\overline{X}(\gamma_x)\ket{I;\gamma_y}$  & $\ket{+,1}_L= \frac{1}{\sqrt{2}}(\ket{0,1}_L +\ket{1,1}_L)  $ 
  \\
  $\ket{f;\gamma_y}=\overline{Z}(\gamma_x)\overline{X}(\gamma_x)\ket{I;\gamma_y}$ & $\ket{-,1}_L= \frac{1}{\sqrt{2}}(\ket{0,1}_L -\ket{1,1}_L)$ 

    \end{tabular}
    \caption{Correspondence between the anyon labels for the MES $\ket{a;\gamma_y}$ and the computational logical basis states~\eqref{eq:stabilizer_gs}.}

    \label{tab:placeholder}
\end{table}

\subsection{Mutual information in the MES basis}

To make statements about the LRN of the TC ground states, we compute the mutual information between two well-separated regions that wrap around non-contractible paths of the torus, as in Fig.~\ref{fig:ttt}. In this section, we provide an explicit expression for the mutual information using the MES basis. Focusing for concreteness on the path $\gamma_y$, an arbitrary ground state of the TC in Eq.~\eqref{eq:generic_state} is written in the the corresponding MES basis as
\begin{equation}\label{eq:expansion_initial_state}
    \ket{\psi} = \sum_a \psi_a(\gamma_y)\ket{a;\gamma_y}.
\end{equation}

We next consider the reduced density matrix of $\ket{\psi}$ on the two disjoints regions $A_y$ and $B_y$ on which $\overline{X}_1$ and $\overline{Z}_2$ are supported, as in the middle panel of Fig.~\ref{fig:ttt}. The reduced density matrix on $A_y$ is
\be\label{eq:rho_blocks}
\rho_{A_y} = \bigoplus_{a} |\psi_a(\gamma_y)|^2 \rho_{A_y}^a\,,
\ee
where $\rho_{A_y}^a$ is defined as
\begin{align}
    \rho_{A_y}^a = \mathrm{Tr}_{\bar{A}_y}\left( \ket{a; \gamma_y}\bra{a;\gamma_y} \right).
\end{align}
The off-diagonal blocks of $\rho_{A_y}$ vanish due to the fact that, for any Pauli string $P$ supported in $A_y$,
\be\label{eq:to_prove_1}
\braket{a;\gamma_y|P| b;\gamma_y} = 0\,, \quad \mathrm{for}\, a\neq b.
\ee
If $P$ creates excitations, then this is true, since $\ket{a; \gamma_y}$ and $\ket{b; \gamma_y}$ are ground states. Otherwise, $P$ must belong to the stabilizer group of $\ket{a; \gamma_y}$ and $\ket{b; \gamma_y}$, implying that it is diagonal in the MES basis; hence, Eq.~\eqref{eq:to_prove_1} follows. 

For our purposes, it is important to recall an additional property of MES corresponding to a loop $\gamma$. That is, their reduced density matrix factorizes over sufficiently distant regions containing non-contractible loops equivalent to $\gamma$~\cite{jian2015long,lee2013entanglement}. In the case of $\gamma_y$ above, this means
\be\label{eq:factorization}
\rho^a_{A_y B_y} = \rho^a_{A_y} \otimes \rho^a_{B_y}\,,
\ee
where $A_y$ and $B_y$ are taken sufficiently far away from one another. As a consequence,
\be\label{eq:final_decomposition_rho}
\rho_{A_y B_y} = \bigoplus_a |\psi_a(\gamma_y)|^2 \rho^a_{A_y} \otimes \rho^a_{B_y}.
\ee
Eq.~\eqref{eq:factorization} is true because
\begin{equation}\label{eq:to_prove_2}
{\rm Tr}_{A_yB_y}[P_{A_y} P_{B_y}   \rho^a_{A_y B_y} ]={\rm Tr}_{A_y}[ P_{A_y} \rho^a_{A_y} ] {\rm Tr}_{B_y}[P_{B_y}\rho^a_{B_y}]
\end{equation}
for all Pauli strings supported on $A_y$ and $B_y$. To prove Eq.~\eqref{eq:to_prove_2}, we first note that 
\begin{align}
{\rm Tr}_{A_yB_y}[P_{A_y} P_{B_y} \rho^a_{A_yB_y} ]&=\langle a;\gamma_y|P_{A_y} P_{B_y}|a; \gamma_y\rangle \,,  \\ 
{\rm Tr}_{A_y}[P_{A_y}\rho^a_{A_y} ]&=\langle a;\gamma_y|P_{A_y}|a; \gamma_y\rangle\,,\\
{\rm Tr}_{B_y}[P_{B_y}\rho^a_{B_y} ]&=\langle a;\gamma_y|P_{B_y}|a; \gamma_y\rangle\,.
\end{align}
Then, Eq.~\eqref{eq:to_prove_2} is established by treating separately the cases where $P_{A_y}$ or $P_{B_y}$ create excitations (in this case, both sides of Eq.~\eqref{eq:to_prove_2} are vanishing)  or neither of them create excitations (in this case, $P_{A_y}$ and $P_{B_y}$ act diagonally in the MES basis, and Eq.~\eqref{eq:to_prove_2} immediately follows).

So far, we have discussed the properties of MES for the vertical loop $\gamma_y$ ($\gamma_y'$). However, the same properties hold for more general non-contractible loops. In the following, we will consider three types of inequivalent loops, which we denote by $\gamma_x$, $\gamma_y$, and $\gamma_{xy}$, as depicted in Fig.~\ref{fig:ttt}. In all cases, the regions $A_{\gamma}$, $B_\gamma$ should be taken to be disjoint and sufficiently large to include the support of the WLO $\overline{X}(\gamma)$ and $\overline{Z}(\gamma')$. Note that we have already identified the WLO corresponding to $x$ and $y$, while we refer to Sec.~\ref{sec:full_classification_TC} for a detailed discussion of the case corresponding to the curve $\gamma_{xy}$.

\begin{figure}
    \centering
    \includegraphics[scale=0.8]{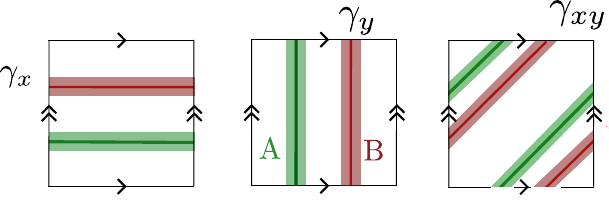}
    \caption{Three inequivalent curves $\gamma_{\alpha}$ on the torus related by modular transformations. For each curve $\gamma_\alpha$, we consider disjoint regions $A$ and $B$ containing curves that are topologically equivalent to it.}
    \label{fig:ttt}
\end{figure}

We are now prepared to compute the mutual information for regions such as those shown in Fig.~\ref{fig:ttt}. Explicitly, let us consider two disjoint regions $A_\gamma$, $B_\gamma$, containing topologically equivalent non-contractible loops. Using the decomposition in Eq.~\eqref{eq:final_decomposition_rho}, we can compute the von Neumann entropies appearing in the expressions for the mutual information, as in Eq.~\eqref{eq:mutual_info}. In particular, we have
\begin{align}
S(\rho_{A_{\gamma
} B_\gamma}) &=H(\{|\psi_{a}(\gamma)|^2\}) + \sum_a |\psi_{a}(\gamma)|^2 \left( S(\rho^a_{A_\gamma}) + S(\rho^a_{B_\gamma})\right),
\end{align}
and 
\begin{align}
S(\rho_{A_\gamma}) &= H(\{|\psi_{a}(\gamma)|^2\}) + \sum_a |\psi_{a}(\gamma)|^2  S(\rho^a_{A_\gamma})\,,\\
S(\rho_{B_\gamma})&= H(\{|\psi_{a}(\gamma)|^2\}) + \sum_a |\psi_{a}(\gamma)|^2  S(\rho^a_{B_\gamma}).
\end{align}
Here, we have have used that the supports of the density matrices $\{\rho_{A_\gamma}^a\}_a$ are orthogonal to one another, while we introduced the classical Shannon entropy
\be
H(\{p_a\}) = -\sum_a p_a \logb (p_a).
\ee
Therefore, we obtain
\begin{align}\label{eq:mut_info_Shannon} 
I_{A_\gamma,B_\gamma}(\ket{\psi}) &= H(\{|\psi_{a}(\gamma)|^2\}). 
\end{align}
That is, we recover the known fact that the mutual information coincides with the classical entropy of the coefficients $\psi_a(\gamma)$ in the MES basis~\cite{lee2013entanglement,jian2015long,shi2020fusion,wen2016edge}. 
We note that, intuitively, the mutual information is equivalent to dephasing the ground state with WLO then measuring the classical entropy of the mixture in the MES basis.

\subsection{LRN from mutual information}
\label{sec:LRN_in_TC_GS}

We now argue that a non-integral value of the mutual information in Eq.~\eqref{eq:mut_info_Shannon} is sufficient for establishing LRN in a TC ground state. We then give an explicit example of a nonstabilizer state encoded in the ground-state subspace of the TC that exhibits LRN. 

First, we establish that, for each $\gamma \in \{\gamma_x,\gamma_y,\gamma_{xy}\}$, the mutual information $ I_{A_\gamma,B_\gamma}(\ket{\psi}) $ is invariant under shallow QCs. This makes sense at an intuitive level, since the quantity $H(\{|\psi_{a}(\gamma)|^2\})$ only depends on weights associated to different ``super-selection sectors'', which cannot be changed by local operations in finite time. We formalize this claim as follows.
\begin{lem}\label{main_lemma}
Let $Q$ be a QC of depth $D_Q = O(\mathrm{polylog}(L))$ where $L={\rm min}(L_x,L_y)$ and $L_x$, $L_y$ are the linear dimensions of the system, namely $N=L_x L_y$. Let $\mathrm{width}(A_\gamma)  = \mathrm{width}(B_\gamma) \geq 2D_Q$, and suppose that the complement $C$ of $A\cup B$ is such that $\mathrm{dist}(A,B)= O(L)\gg\mathrm{width}(A_\gamma)$. Then, for any TC ground state $\ket{\psi}$, we have
\be 
I_{A_\gamma,B_\gamma}(Q\ket{\psi}) = I_{A_\gamma,B_\gamma}(\ket{\psi})\,.
\ee
\end{lem}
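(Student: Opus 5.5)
The plan is to show that the decomposition of Eq.~\eqref{eq:final_decomposition_rho}, with orthogonal blocks, survives conjugation by $Q$, with the same weights $|\psi_a(\gamma)|^2$. Then the computation leading to Eq.~\eqref{eq:mut_info_Shannon} applies verbatim to $Q\ket{\psi}$ and gives $I_{A_\gamma,B_\gamma}(Q\ket{\psi})=H(\{|\psi_a(\gamma)|^2\})=I_{A_\gamma,B_\gamma}(\ket{\psi})$.

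Write $\ket{\psi}=\sum_a\psi_a(\gamma)\ket{a;\gamma}$ and set $\ket{\tilde a}=Q\ket{a;\gamma}$. Let $A^+$ and $B^+$ denote $A_\gamma$ and $B_\gamma$ enlarged by the light cone of $Q$, i.e.\ by $O(D_Q)$ sites. Since $\mathrm{dist}(A,B)=O(L)\gg D_Q$, the sets $A^+$ and $B^+$ are disjoint. Each is still an annulus homologous to $\gamma$ and does not cover the torus.

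\textbf{Step 1 (block diagonality).} Any operator $O$ supported on $A$ satisfies $Q^\dagger O Q$ supported on $A^+$. Expand $Q^\dagger OQ$ in Pauli strings on $A^+$ and reuse the argument around Eq.~\eqref{eq:to_prove_1}. A Pauli string on the annulus $A^+$ either creates excitations, so that its ground-state matrix elements vanish, or it commutes with all $A_v$ and $B_p$. In the latter case it is, up to a phase, a product of stabilizers and Wilson loops along cycles inside $A^+$. Those cycles are homologous to $\gamma$, because the transverse cycle does not fit in $A^+$. Such a string is therefore diagonal in the $\ket{a;\gamma}$ basis. Hence $\bra{\tilde a}O\ket{\tilde b}=0$ for $a\neq b$.

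\textbf{Step 2 (factorization).} The same reasoning applied to $Q^\dagger O_AO_BQ=(Q^\dagger O_AQ)(Q^\dagger O_BQ)$, with supports in the disjoint sets $A^+$ and $B^+$, reproduces Eq.~\eqref{eq:to_prove_2}. This yields $\tilde\rho^a_{AB}=\tilde\rho^a_A\otimes\tilde\rho^a_B$, where $\tilde\rho^a$ denotes the reduced density matrices of $\ket{\tilde a}$.

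\textbf{Step 3 (orthogonal supports).} Step 1 only kills cross terms; the entropy formula also needs the blocks $\tilde\rho^a_A$ to have mutually orthogonal supports. This is where $\mathrm{width}(A_\gamma)\ge 2D_Q$ is used. Choose the Wilson loops $\overline X(\gamma)$ and $\overline Z(\gamma')$ in the central band of $A_\gamma$, at distance at least $D_Q$ from its boundary. Then the dressed loops $Q\overline X(\gamma)Q^\dagger$ and $Q\overline Z(\gamma')Q^\dagger$ are supported inside $A_\gamma$. They commute, and their joint eigenvalues on $\ket{\tilde a}$ are exactly the anyon-flux labels $a$. Their joint spectral projectors $\Pi_a$ act on $A_\gamma$ and satisfy $\Pi_a\tilde\rho^b_A\Pi_a=\delta_{ab}\tilde\rho^b_A$, which gives orthogonality; the same holds on $B_\gamma$. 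Combining the three steps gives $\rho_{AB}(Q\ket{\psi})=\bigoplus_a|\psi_a(\gamma)|^2\,\tilde\rho^a_A\otimes\tilde\rho^a_B$, and the Shannon-entropy computation concludes the proof.

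I expect the main obstacle to be the claim in Step 1 that every Pauli string on the annulus $A^+$ preserving the ground space is diagonal in the $\gamma$-MES basis. This amounts to the statement that logical operators supported on an annulus homologous to $\gamma$ are generated by $\overline X(\gamma)$ and $\overline Z(\gamma')$. I would prove it via the homology of the annulus: a closed $X$-string (respectively $Z$-string) on $A^+$ is homologous either to a trivial cycle, in which case it is a product of $B_p$ (respectively $A_v$), or to a multiple of $\gamma$. The key input is that $A^+$ cannot support the conjugate loop crossing $\gamma$.
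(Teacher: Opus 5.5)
Your proof is correct, but it takes a different route from the paper. The paper does no model-specific work for this lemma. It compactifies the torus into a ring whose sites are the columns of the lattice, so that $A_\gamma,B_\gamma$ become intervals and $Q$ remains a shallow 1D circuit. It then invokes the 1D invariance lemma of Ref.~\cite{korbany2025long}, whose only input is the block form $\rho_{AB}=\bigoplus_a|\psi_a(\gamma)|^2\rho^a_A\otimes\rho^a_B$ of the \emph{unperturbed} state.

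You instead prove directly that this block form survives the circuit, with the dressed MES $Q\ket{a;\gamma}$ as sectors. Steps 1--2 rerun the stabilizer argument on the light-cone-enlarged annuli $A^+,B^+$. Step 3 obtains orthogonality of the sectors from the dressed loops $Q\overline{X}(\gamma)Q^\dagger$, $Q\overline{Z}(\gamma')Q^\dagger$, which shows clearly where the width hypothesis enters. Your homology argument in Step 1 is fine. A quicker version: a stabilizer-commuting string on $A^+$ also commutes with copies of $\overline{X}(\gamma)$, $\overline{Z}(\gamma')$ placed in $B$, and these generate a maximal abelian subgroup of the logical Pauli group. The vanishing of cross terms should be stated for operators on $A\cup B$, not only on $A$, but the same argument covers it.

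Your route gives a self-contained proof and an explicit description of $\rho_{AB}(Q\ket{\psi})$. The paper's route gives generality, since it uses only the decomposition property. That is why it transfers verbatim to string nets in Sec.~\ref{sec:non_abelian} and Appendix~\ref{sec:proof_mutual_info_change_stringnets}. Your Steps 1--3 lean on Pauli stabilizers and unitary $\pm1$-valued Wilson loops, and would need to be redone for the non-unitary $W_a$.

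A model-independent shortcut also fits your framework. By the light-cone argument, $\rho_{AB}(Q\ket{\psi})$ is obtained from $\rho_{A^+B^+}(\ket{\psi})$ by a product channel. Likewise, $\rho_{A^-B^-}(\ket{\psi})$ is obtained from $\rho_{AB}(Q\ket{\psi})$, with $A^-,B^-$ the bands shrunk by $D_Q$. Data processing then gives $I_{A^-,B^-}(\ket{\psi})\le I_{A,B}(Q\ket{\psi})\le I_{A^+,B^+}(\ket{\psi})$. Both ends equal $H(\{|\psi_a(\gamma)|^2\})$ as long as $A^-,B^-$ still contain the Wilson-loop supports, which is exactly what the width condition guarantees.
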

\noindent This lemma is a straightforward generalization of the one presented in Ref.~\cite{korbany2025long} in the context of MPS. We report a proof in Appendix~\ref{sec:lemmas_mut_info}. In the following, we will assume that $L_x$ and $L_y$ diverge with $N$, scaling as $O(\sqrt{N})$. This assumption eliminates trivial cases where the very definition of short-range nonstabilizerness may be ill-defined (this would be the case, for instance, if $L_x$ is kept fixed as $N\to\infty$).

From Lemma \ref{main_lemma}, we then arrive at our desired result:
\begin{thm}\label{thrm:lrm}
Let  $\{\ket{\psi^{(N)}}_N\} $ be a family of ground-states for increasing system size $N$. A sufficient condition for $\{\ket{\psi^{(N)}}_N\}$ to have LRN is that there exists a non-contractible loop $\gamma$ for which
\begin{equation}\label{eq:main_eq}
  \lim_{N\to \infty} I_{A_\gamma,B_\gamma}(\ket{\psi^{(N)}}_N) =   \lim_{N\to \infty}H(\{|\psi^{(N)}_{a}(\gamma)|^2\}) \notin \mathbb{N}\,.
\end{equation}
\end{thm}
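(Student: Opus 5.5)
We argue by contradiction, combining Lemma~\ref{main_lemma} with the integer quantization of the mutual information of stabilizer states and the continuity of von Neumann entropy. Suppose $\{\ket{\psi^{(N)}}\}_N$ has short-range nonstabilizerness. Then, for any fixed $\varepsilon_0>0$ and $\alpha>0$, there is a local QC $Q_{D_N}$ of depth $D_N=O(\mathrm{polylog}(N))=O(\mathrm{polylog}(L))$ (using $L=O(\sqrt N)$) and a stabilizer state $\ket{S_N}$ satisfying $\Delta(Q_{D_N}\ket{\psi^{(N)}},\ket{S_N})\le \varepsilon_0/N^\alpha$.

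Fix the loop $\gamma$ from the hypothesis. Choose regions $A_\gamma,B_\gamma$ of width at least $2D_N$, still polylogarithmic, separated by a distance $O(L)$, so that Lemma~\ref{main_lemma} applies. The lemma then gives
\[
I_{A_\gamma,B_\gamma}(Q_{D_N}\ket{\psi^{(N)}})=I_{A_\gamma,B_\gamma}(\ket{\psi^{(N)}})=H(\{|\psi^{(N)}_a(\gamma)|^2\}),
\]
where the last equality is Eq.~\eqref{eq:mut_info_Shannon}. This requires the widths to be large enough that the WLO fit inside the regions and the MES factorization \eqref{eq:factorization} holds. That is true once the regions contain a non-contractible loop and are a distance $O(L)$ apart.

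Next we compare $Q_{D_N}\ket{\psi^{(N)}}$ with $\ket{S_N}$. The reduced states on $A_\gamma$, $B_\gamma$, and $A_\gamma\cup B_\gamma$ are within trace distance $\varepsilon_N$, because partial trace is contractive. The Fannes--Audenaert inequality bounds each entropy difference by $\varepsilon_N |R| + h(\varepsilon_N)$, where $|R|\le N$ is the number of qubits in the region and $h$ is the binary entropy. Taking $\alpha>1$ makes $\varepsilon_N N\to 0$, and $h(\varepsilon_N)\to 0$ as well. Hence
\[
|I_{A_\gamma,B_\gamma}(Q_{D_N}\ket{\psi^{(N)}})-I_{A_\gamma,B_\gamma}(\ket{S_N})|\to 0 .
\]
Since $I_{A_\gamma,B_\gamma}(\ket{S_N})\in\mathbb{N}$, the distance from $H(\{|\psi^{(N)}_a(\gamma)|^2\})$ to $\mathbb{N}$ tends to zero. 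If the limit in Eq.~\eqref{eq:main_eq} exists, it must therefore be an integer, which contradicts the hypothesis. So the family has LRN.

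The main obstacle is the bookkeeping in the continuity step. The Fannes bound scales with region size, so we need $\alpha>1$, or regions with $|R|$ only $O(L\,\mathrm{polylog} N)$, to make $\varepsilon_N|R|\to0$. This is allowed because short-range nonstabilizerness must hold for all $\alpha$. We also need the region geometry (widths at least $2D_N$, separation $O(L)$) to be consistent simultaneously with the hypotheses of Lemma~\ref{main_lemma} and with the MES factorization.
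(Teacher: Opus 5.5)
Your proposal is correct and takes the same route as the paper. The paper omits the proof and defers to the MPS argument of Ref.~\cite{korbany2025long}, but names exactly your three ingredients: Lemma~\ref{main_lemma}, the integer quantization of stabilizer-state mutual information, and Fannes--Audenaert continuity. Your bookkeeping fills in that omitted argument soundly: you take $\alpha>1$ (or regions of size $O(L\,\mathrm{polylog}N)$) so that $\varepsilon_N|R|\to 0$, and you choose regions of width at least $2D_N$ separated by $O(L)$.
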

\noindent This theorem follows from Lemma \ref{main_lemma}, the quantization of the mutual information for stabilizer states, and the continuity of the von Neumann entropy, as expressed by the Fannes-Audenaert inequality~\cite{audenaert2007sharp}. The proof is the same as the one presented in Ref.~\cite{korbany2025long} in the context of MPS, so we omit it here.

Let us now present an application of this theorem by considering the following encoded $T$ state, using the logical computational basis defined in Eq.~\eqref{eq:stabilizer_gs},
\be \label{eq: T state}
\ket{T}_L = \frac{1}{\sqrt{2}} \left( \ket{00}_L + \exp(i\pi/4) \ket{10}_L\right)\,.
\ee
Explicitly, there is a $T$ state encoded in the first logical qubit. We can see that this state has LRN by considering the mutual information between regions $A_y$ and $B_y$. 

The first step in calculating the mutual information between $A_y$ and $B_y$ is to rewrite the state in the MES basis for $\gamma_y$, corresponding to the eigenstates of $\overline{X}_{1}$ and $\overline{Z}_{2}$. This gives us, up to a global phase:
\begin{align}
\ket{T}_L &= \cos(\pi/8)\ket{+,0}_L - i\sin(\pi/8)\ket{-,0}_L\\
&=\cos(\pi/8)\ket{I;\gamma_y} - i\sin(\pi/8)\ket{e;\gamma_y}\,.
\end{align}
Using Eq.~\eqref{eq:mut_info_Shannon}, the mutual information between $A_y$ and $B_y$ is then
\begin{equation}
\begin{split}
H(\{|\psi_a(\gamma_y)|^2\}) = &-\cos(\pi/8)^2 \logb(\cos(\pi/8)^2) -\sin(\pi/8)^2 \logb(\sin(\pi/8)^2)\notin \mathbb{N}.
\end{split}
\end{equation}
Thus, theorem~\ref{thrm:lrm} implies that $\ket{\psi}$ has LRN. 

It is important to note that the choice of non-contractible regions matters. If we instead compute the mutual information between the regions $A_x$ and $B_x$, then the mutual information is inconclusive. To see this, note that the MES associated with the loop $\gamma_x$ are, analogously to Eq.~\eqref{eq:pm_j_l}, the four states
\begin{equation}\label{eq:j_pm_l}
\ket{j,\pm}_L=\frac{(\ket{j, 0}_L\pm \ket{j,1}_L)}{\sqrt{2}} \,. 
\end{equation}

Writing the state $\ket{T}_L$ in this basis gives
\begin{align}
    \ket{T}_L &= \frac12 \left( \ket{0,+}_L + \ket{0,-}_L + e^{i \pi/4}\ket{1,+}_L+e^{i \pi/4}\ket{1,-}_L
     \right) \nonumber \\
     &= \sum_a \psi_a(\gamma_x) \ket{a;\gamma_x}.
\end{align}
The mutual information computed using Eq.~\eqref{eq:mut_info_Shannon} is thus
\begin{align}
    H(\{|\psi_a(\gamma_x)|^2\}) = 2,
\end{align}
which would be inconclusive. 

More generally, one could consider an arbitrary encoded $T$ state of the form
\be\label{eq:general_t_state}
\ket{T}_L = \frac{1}{\sqrt{2}} \left( \ket{ij}_L + \exp(i\pi/4) \ket{i'j'}_L\right),
\ee
for two orthogonal logical computational basis states $\ket{ij}_L$ and $\ket{i'j'}_L$.
In Appendix~\ref{sec:srm_tc}, we explicitly write the change of basis from an arbitrary ground state in the computational logical basis, to the MES bases associated to the loops $\gamma_x$, $\gamma_y$, $\gamma_{xy}$ of Fig.~\ref{fig:ttt}. From inspection of these equations, it follows that one can always find \emph{exactly one} curve $\gamma\in \{\gamma_x,\gamma_y,\gamma_{xy}\}$ such that\footnote{From Eqs.~\eqref{eq:basis_trafo_y},~\eqref{eq:basis_trafo_x}, and~\eqref{eq:basis_trafo_xy}, one sees that the MES depend on all relative phases of the coefficients $\alpha_i$ of the logical computational basis \eqref{eq:generic_state}: Given $\alpha_i\neq \alpha_j$, there is one $\gamma$ such that the coefficients of the corresponding MES include $\alpha_i \pm \alpha_j$. }
\begin{align}
\ket{T}_L = \cos(\pi/8)\ket{a;\gamma} - i\sin(\pi/8)\ket{b;\gamma},
\end{align}
for some $a$ and $b$. This leads us to the same conclusion: that the state in Eq.~\eqref{eq:general_t_state} has LRN.

\subsection{Classification of LRN from modular transformations}
\label{sec:full_classification_TC}

Theorem~\ref{thrm:lrm} only gives a sufficient condition for LRN. One may wonder if there are LRN states with an integer mutual information $I_{A_\gamma,B_\gamma}$ for all choices of $\gamma$. We now show that this is not possible and that the condition in Eq.~\eqref{eq:main_eq} is necessary. 

As a preliminary consideration, we note that the values $I_{A_\gamma,B_\gamma}(\ket{\psi})$ are not independent of each other for different $\gamma$. This is because they are determined by the coefficients $\{\psi_{a}(\gamma)\}$, which are related by changes of bases. In fact, a standard argument from TQFT predicts that such changes of bases are implemented by the modular $S$ and $T$ unitary matrices that define the exchange and mutual statistics of anyons~\cite{wen1990topological,kitaev2006anyons, zhang2012quasiparticle,simon2023topological}. Specifically, the matrix element $S_{a,b}$ encodes the mutual statistics of anyons $a$ and $b$, while $T_{a,b}=\theta_a \delta_{a,b}$, where $\theta_a$ is the topological spin of type $a$ (self-statistics).

The argument relies on the properties of the modular group. In continuous space, the latter is the set of topologically distinct orientation-preserving diffeomorphisms of the torus, and is generated by two transformations traditionally denoted by $s$ (a $\pi/2$-rotation that switches the meridian and longitude of the torus) and $t$ (the Dehn twist). Given a TQFT defined on the torus, it can then be argued that these generators preserve the corresponding ground space and act as the unitary matrices $S$ and $T$~\cite{wen1990topological,kitaev2006anyons,simon2023topological}. This identification is consistent with the action of the modular transformations on non-contractible loops, and is at the basis of several works aiming at extracting topological data, including the $S$ and $T$ matrices from the ground-state wave functions~\cite{zhang2012quasiparticle,cincio2013charac,zaletal2013topological,mei2015modular}.

In our setting, the TQFT point of view is  made precise by exploiting rigorous constructions implementing the modular transformations on the lattice~\cite{zhu2020instantaneous,zhu2020quantum,smith2020intrinsic}. In the next section, we invoke these constructions when studying generic, non abelian string-net models. In the TC, however, one can follow a more elementary approach and write down explicitly the change of basis corresponding to MES of different non-contractible loops. This was done in Ref.~\cite{zhang2012quasiparticle}, but we repeat the calculations below for completeness. The knowledge of the unitary matrices implementing the changes of basis ultimately allows us to show that the only states having integer mutual information for all $\gamma$ are stabilizer states.

Recall that the states $\ket{a;\gamma_y}$ are simultaneous eigenstates of $\overline{X}_1$ and $\overline{Z}_2$, while the states
$\ket{a;\gamma_x}$ are simultaneous eigenstates of  $\overline{X}_2$ and $\overline{Z}_1$. Thus, using the explicit formulae in Eqs.~\eqref{eq:pm_j_l} and \eqref{eq:j_pm_l}, the unitary matrix which relates the two bases is
\be
S_{ab} = \braket{b;\gamma_y|a;\gamma_x} = \frac{1}{2}\begin{pmatrix}
 1 &1& 1& 1\\ 1& 1& -1& -1\\ 1& -1& 1& -1 \\1& -1& -1& 1
\end{pmatrix}\,.
\ee
This is exactly the modular $S$ matrix associated with the TC anyons~\cite{zhang2012quasiparticle}, as previously stated.

Finally, we consider loops following the regions we labeled by $xy$, in the rightmost panel of Fig.~\ref{fig:ttt}. As shown in Fig.~\ref{fig:xy_lattice}, the action of the WLOs $\overline{Z}(\gamma_{xy})$ and $\overline{X}(\gamma_{xy})$ on ground states is equal to that of $\overline{Z}_1\overline{Z}_2$ and $\overline{X}_1\overline{X}_2$, respectively, as they are related by irrelevant vertex and plaquette operators. The eigenstates $\ket{a;\gamma_{xy}}$ of these operators are Bell states in the logical computational basis, namely
\begin{align}
    \ket{\psi ^{\pm}}_L &= \frac{1}{\sqrt{2}}
    \left(\ket{00}_L \pm \ket{11}_L \right) \nonumber, \\
    \ket{\phi ^{\pm}}_L &= \frac{1}{\sqrt{2}}\left(\ket{01}_L \pm \ket{10}_L\right).
\end{align}

The change of basis now reads
\be
\braket{b;\gamma_x|a;\gamma_{xy}} =(ST)_{ab}\,,
\ee
where $T_{a,b} =\delta_{a,b}\theta_a$ is precisely the diagonal matrix of topological spins, with $\theta_a=1$ for $a\in \{I,e,m\}$ and $\theta_f=-1$. 

In summary, for a given state $\ket{\psi}$, the transformations of the four-vectors $\{\psi_a(\gamma)\}_a$ are implemented by the unitary group generated by the matrices $T$ and $S$, according to the following diagram
\be\label{eq:cd_basis_states_tc}\begin{tikzcd}
	& {|a;\gamma_y\rangle} \\
	{|a;\gamma_{xy}\rangle} && {|a;\gamma_x\rangle}
	\arrow["STS"', from=1-2, to=2-1]
	\arrow["S", from=1-2, to=2-3]
	\arrow["ST", from=2-3, to=2-1]
\end{tikzcd}
\ee
This observation allows us to complete the classification of LRN in the TC ground space. Indeed, the fact that the mutual information $I_{A_y,B_y}(\ket{\psi})$ is an integer for all choices of $\gamma$, is translated into the fact that the entropy $H(\{|\psi_a(\gamma)|^2\})$ is an integer under the action of the group of unitaries generated by the matrices $T$ and $S$. In Appendix~\ref{sec:srm_tc}, we find that the only four-vectors with this property correspond to stabilizer states, thus arriving at our second main result.
\begin{thm}\label{thrm:srm_tc}
    Let $\ket{\psi}$ be a ground state of the toric code. Then
    \be\label{eq:ness_cond_thrm}
    I_{A_\gamma,B_\gamma}(\ket{\psi}) \in \mathbb{N} \quad \mathrm{for\,all\,}\gamma
    \ee
    if and only if $\ket{\psi}$ is a stabilizer state.
\end{thm}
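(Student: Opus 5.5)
The forward direction is immediate from the preliminaries. If $\ket{\psi}$ is an ($N$-qubit) stabilizer state, then each entropy $S(\rho_{A_\gamma})$, $S(\rho_{B_\gamma})$, $S(\rho_{A_\gamma B_\gamma})$ is an integer by Eq.~\eqref{eq:von_neumann}. Hence $I_{A_\gamma,B_\gamma}(\ket{\psi})\in\mathbb{N}$ for every $\gamma$.

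\emph{Reduction of the converse to two logical qubits.} I would first note that a TC ground state is an $N$-qubit stabilizer state if and only if the encoded two-qubit state is a logical stabilizer state. This holds because the logical Paulis $\overline{X}_i,\overline{Z}_i$ are physical Pauli strings, and together with $\{A_v,B_p\}$ they generate the full stabilizer group. Next, I would argue that only three distributions need to be checked. Any non-contractible loop is labeled by coprime winding numbers $(p,q)$. Its Wilson loops act on the ground space as $\overline{X}_1^{p}\overline{X}_2^{q}$ and $\overline{Z}_1^{q}\overline{Z}_2^{p}$ (up to stabilizers and the orientation conventions), and these depend only on $(p,q)$ mod $2$. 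So every $\gamma$ reduces, up to relabeling of anyons, to $\gamma_x$, $\gamma_y$ or $\gamma_{xy}$, consistent with the finite image of the modular group generated by $S$ and $T$ in diagram~\eqref{eq:cd_basis_states_tc}. By Eq.~\eqref{eq:mut_info_Shannon}, the hypothesis therefore becomes: the three distributions $p^{(\gamma)}_a=|\psi_a(\gamma)|^2$ each have Shannon entropy in $\{0,1,2\}$.

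\emph{Rewriting the distributions via Pauli expectations.} Each $p^{(\gamma)}$ is the outcome distribution of measuring a maximal commuting set of logical Paulis: $\{X_1,Z_2\}$, $\{Z_1,X_2\}$ and $\{X_1X_2,Z_1Z_2\}$. Hence $p^{(\gamma)}_a=\frac14\bigl(1\pm\langle P\rangle\pm\langle Q\rangle\pm\langle PQ\rangle\bigr)$, and the problem becomes a statement about nine of the fifteen Pauli expectations of a pure two-qubit state. I would split into cases according to the triple of entropies.

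\begin{itemize}
\item If some entropy is $0$, then $\ket{\psi}$ is an MES, hence a stabilizer state.
\item If some entropy is $2$, that distribution is uniform. Then $\ket{\psi}=\frac12\sum_a e^{i\phi_a}\ket{a;\gamma}$, and the other two distributions are determined by the phases $\phi_a$ through $S$ and $ST$ of Appendix~\ref{sec:srm_tc}.
\item The remaining case, where all three entropies equal $1$, must be played off against the others.
\end{itemize}

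In each case, I would use purity (the Bloch-type identity $\sum_P\langle P\rangle^2=3$ over the fifteen nonidentity Paulis) and the explicit basis changes to force all nine expectations into $\{0,\pm1\}$ with the correct number of $\pm1$'s. From this I would conclude that $\ket{\psi}$ has a stabilizer group of order $4$ in the logical Pauli group.

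\emph{Main obstacle.} The hard step is that $H(p)\in\mathbb{N}$ does not by itself force $p$ to be dyadic. For example, there are non-uniform four-outcome distributions with entropy exactly $1$, so no single basis gives a rigidity statement. I would handle this by treating the three entropy equations as analytic constraints on the state parametrized in one MES basis, $\psi_a(\gamma_y)=r_ae^{i\phi_a}$. The first distribution depends only on the $r_a$, while the other two depend on $|r_a\pm r_b e^{i(\phi_b-\phi_a)}|^2$-type combinations, as in the footnote relating relative phases to specific loops. My first attempt would be to show that equality on a level set of entropy $1$ in one basis, combined with the equalities in the other two, is incompatible unless each relative phase is a multiple of $\pi/2$ and each $r_a\in\{0,2^{-1/2},\tfrac12,1\}$. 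I would use the strict concavity of $H$ and the fact that the dephasing in one basis only increases the entropy seen in a conjugate basis (an entropic uncertainty bound $H(p^{(\gamma)})+H(p^{(\gamma')})\ge 2$ for mutually unbiased pairs) to pin the triple of entropies to $(0,2,2)$, $(1,1,2)$ or permutations thereof. Finally, I would check that each such pattern is realized only by stabilizer states, which reduces to a finite check on the phases.
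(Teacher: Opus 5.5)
\textbf{There is a genuine gap in the converse.}

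Your forward direction is fine. Restricting to $\gamma_x,\gamma_y,\gamma_{xy}$ is harmless, since the hypothesis already contains these three loops.

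The converse, however, rests on a false claim. The pairwise bound $H(p^{(\gamma)})+H(p^{(\gamma')})\ge 2$ does not pin the entropy triple to $(0,2,2)$ or $(1,1,2)$ up to permutation. It only excludes triples in which a $0$ appears together with a $0$ or a $1$. Two of the surviving patterns are realized by stabilizer states:
\begin{itemize}
\item $\ket{00}_L$ has distribution $(\tfrac12,\tfrac12,0,0)$ in all three MES bases, i.e.\ entropies $(1,1,1)$.
\item $\tfrac12(\ket{00}_L+i\ket{01}_L+i\ket{10}_L+\ket{11}_L)$ has entropies $2,2,1$ for $\gamma_y,\gamma_x,\gamma_{xy}$.
\end{itemize}
Conversely, $(1,1,2)$ never occurs. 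Take the pair $\gamma_y,\gamma_x$; the other pairs work the same way. If both entropies equal $1$, the distributions are flat (see below). Then $\ket{\psi}$ is a $\pm1$ eigenstate of one Pauli from $\{X_1,Z_2,X_1Z_2\}$ and of a commuting one from $\{Z_1,X_2,Z_1X_2\}$. The only commuting pairs are $(X_1,X_2)$, $(Z_2,Z_1)$ and $(X_1Z_2,Z_1X_2)$. Their products lie in $\pm\{X_1X_2,Z_1Z_2,Y_1Y_2\}$, which forces the $\gamma_{xy}$ entropy to be $1$. So your final finite check targets the wrong patterns. Any argument proving your pinning would prove something false, since it would exclude both examples above.

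More fundamentally, the obstacle you identify is never resolved: an integer entropy need not come from a dyadic distribution, and concavity plus the uncertainty inequality gives no rigidity. The missing idea, which the paper uses, is the \emph{equality case} of the Maassen--Uffink bound. The three MES bases are pairwise mutually unbiased ($|M_{ab}|=\tfrac12$ for $M\in\{S,ST,STS\}$). So two entropies equal to $1$ saturate the bound and force both distributions to be flat, i.e.\ $(\tfrac12,\tfrac12,0,0)$ up to permutation. Elementary algebra then gives a stabilizer state. An entropy of $2$ is automatically uniform. This leaves only two patterns, $(2,2,2)$ and $(1,2,2)$.
\begin{itemize}
\item \emph{Pattern $(2,2,2)$ is impossible.} It would force $|\alpha_i\pm\alpha_j|=1/\sqrt2$ for all $i\neq j$, i.e.\ four mutually equidistant points in the plane. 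In your Pauli language: the nine relevant expectations vanish. The remaining six split into the pairwise anticommuting triples $\{Y_1,X_1Y_2,Z_1Y_2\}$ and $\{Y_2,Y_1X_2,Y_1Z_2\}$, each with squared sum at most $1$. That contradicts $\sum_P\langle P\rangle^2=3$.
\item \emph{Pattern $(1,2,2)$.} Write $\alpha_i=r_ie^{i\phi_i}$ in the logical computational basis and take, say, $I_{\gamma_x}=I_{\gamma_y}=2$. The uniformity conditions give $r_1=r_4$ and $r_2=r_3$. They also give $\mathrm{Re}(\alpha_i\bar\alpha_j)=0$ for the pairs $(1,2),(3,4),(1,3),(2,4)$. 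Hence the $\gamma_{xy}$ distribution has at most two nonzero entries, and binary entropy $1$ forces them to equal $\tfrac12$. Only then are you left with the finite phase check you describe.
\end{itemize}
In the $(1,2,2)$ case, keep the degenerate branch $r_1=r_4=0$. For example, $\tfrac{1}{\sqrt2}(\ket{10}_L+i\ket{01}_L)$ has entropies $2,2,1$ but $|\alpha_1|=0$. So the claim $|\alpha_i|=\tfrac12$ in Lemma~\ref{lem:tc_constraints} is too strong there, although this state is still a stabilizer state.
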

In the proof, we make use of the Maasen-Uffinik bound \cite{maassen1988generalized} which gives us a non-trivial lower bound on the sum \be  
I_{A_\gamma,B_\gamma}(\ket{\psi})+ I_{A_{\gamma'},B_{\gamma'}}(\ket{\psi})
\ee
for two distinct $\gamma,\gamma'$. We note that in Ref.~\cite{jian2015long}, this inequality was dubbed the topological uncertainty principle. Furthermore, we also show that the sum 
\begin{equation}
\sum_{\gamma \in \{\gamma_x,\gamma_y,\gamma_{xy}\}}
    I_{A_{\gamma},B_{\gamma}}(\ket{\psi})
\end{equation}
has a non-trivial upper bound. We find that the integer solutions, $I_{A_\gamma,B_\gamma}(\ket{\psi}) \in \mathbb{N} \quad \mathrm{for\,all\,}\gamma$, are exactly the extremal points of these two inequalities. We show explicitly that all solutions, given as four-vectors, correspond to the $60$ possible two-qubit encoded stabilizer states, and are thus are stabilizer states.

Combined with Theorem~\ref{thrm:lrm}, we finally obtain a necessary and sufficient condition, confirming that our approach is able to provide a full classification of LRN in the TC ground space.

\begin{figure}
\includegraphics[width=\linewidth]{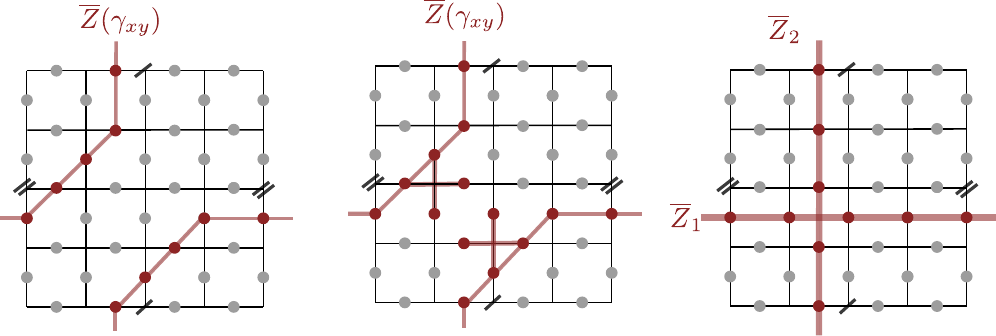}
\caption{Left: support of the  WLO $\overline{Z}(\gamma_{xy})$ on the dual lattice, corresponding to the loop $\gamma_{xy}$ in Fig.~\ref{fig:ttt}. Center, Right: by applying vertex operators, we can change the support of the WLO, mapping it to the product $\overline{Z}_1 \overline{Z}_2$ defined in Fig.~\ref{fig:tc1}. This procedure does not change its eigenstates within the ground space. Similar manipulations can be done for the WLO $\overline{X}(\gamma_{xy})$.}
\label{fig:xy_lattice}
\end{figure}

\section{Non-abelian topological order}
\label{sec:non_abelian}

In this section, we extend our approach to systems with non-abelian topological order. For concreteness, we study string-net models on the torus and use the doubled Fibonacci model as our main example. We find that the mutual information alone has limited power in diagnosing the LRN in general non-Abelian theories.

\subsection{The string-net models}

\begin{figure}
\centering
\includegraphics[width=0.8\linewidth]{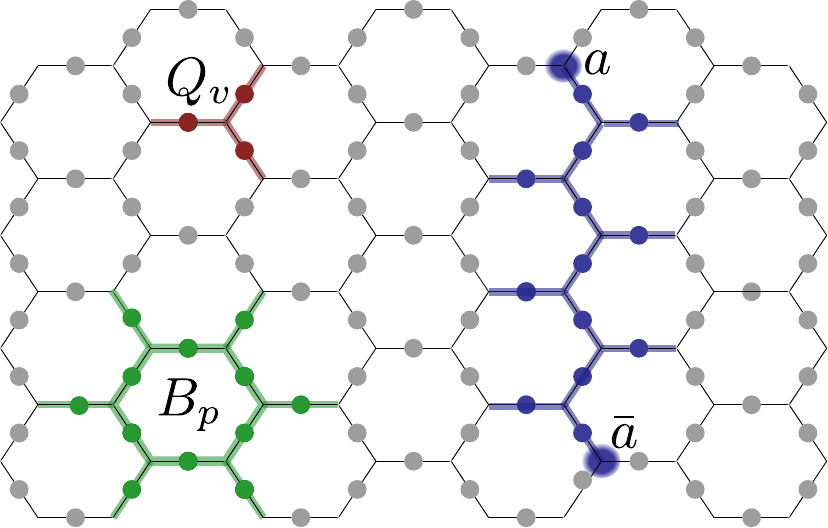}
\caption{Pictorial representation of the string-net lattice. The figure also shows the supports of the vertex, plaquette, and string operators.}
\label{fig:placeholder}
\end{figure}

The string-net models, introduced in Ref.~\cite{levin2005string}, form a family of models that, under the assumption of a ``strict" area law, describe all possible two-dimensional, gapped phases of matter~\cite{kim2024classifying}. Each element in this family admits a lattice realization, whose details depend on a set of input data, as we now briefly review.

First, string-net models are defined on a honeycomb lattice, with local degrees of freedom placed on the edges $e\in E$, cf. Fig.~\ref{fig:placeholder}. The local Hilbert space dimension associated with each edge depends on the number of ``string types'', which is the first piece of input in the model. In the case of the doubled Fibonacci model, $N_a=2$, so that each edge hosts a qubit\footnote{For more general string-net models, $N_a>2$. If one insists on considering qubits, rather than qudits, then the qudit can be decomposed in terms of qubits. In this work, we restrict to string-net models defined on qubit lattices.}. 

In addition to the number of string types, the model is specified by a set of allowed \emph{branching} (or fusion) rules\footnote{These fusion rules should not be confused with the fusion rules of the anyons. The input data of the string-net model is a unitary fusion category, while the output is a unitary modular tensor category, which has a notion of braiding.}. Denoting the fusion of string types by $a\times b$, the branching rules read
 \begin{equation}
\label{eq:fusion_rule}
a\times b =\sum_{c} N_{a,b}^c c\,,
\end{equation}
where $N_{a,b}^c$ are complex numbers satisfying some consistency conditions~\cite{simon2023topological}. For the doubled Fibonacci model, the branching rules can be found in, e.g., Refs.~\cite{levin2005string, simon2023topological}.

The string types and branching rules determine the lattice Hamiltonian
\begin{equation}
\label{eq:string_net_hamiltonian}
H=-\sum_v Q_v-\sum_p B_p\,.
\end{equation}
Here, the sums run over all the vertices $v$ and plaquettes $p$, the operator $Q_v$ acts on the three qubits connected to the  vertex $v$, while $B_{p}$ is a twelve-qubit plaquette operator, as depicted in Fig.~\ref{fig:placeholder}. Indeed, while the structure of the Hamiltonian in Eq.~\eqref{eq:string_net_hamiltonian} is common to all string-net models, the action of the operators $Q_v$ and $B_p$ depends on the fusion rules. As we will not need their explicit form, we refer to Ref.~\cite{levin2005string,Lin2021Generalized} for more detail and explicit expressions in the case of the doubled Fibonacci model. We note that the $\ZZ_2$ TC is a special case of the  string-net model, with the vertex and plaquette operators replaced with unitaries, as opposed to projectors.

In general, the string-net models exhibit anyonic excitations. In the doubled Fibonacci model, for example, the anyon theory $\mathcal{A}$ contains four elements, denoted by $I$ (the trivial anyon), $\tau$, $\bar\tau$, and $\tau\bar{\tau}$. On a torus geometry, the degeneracy of the ground space of the Hamiltonian coincides with the number of anyon types in the model, consistent with TQFT predictions~\cite{simon2023topological}. 

\subsection{The MES and the mutual information}

We now generalize Theorem~\ref{thrm:lrm} to the present setting. Our derivation closely follows the one given for the $\mathbb{Z}_2$ TC. It is based on the properties of MES, which can be defined for the string-net models as follows.

First, given a path $\gamma$ on the lattice, we recall that one can construct a string operator $W_{a}(\gamma)$ that creates a quasiparticle $a$ at the string's starting point, and the corresponding anti-particle $\bar{a}$ at the string endpoint~\cite{levin2005string,Lin2021Generalized}. Different from the TC, the operator $W_{a}(\gamma)$ is not only supported on $\gamma$, but also on a set of of neighboring edges, cf. Fig.~\ref{fig:placeholder}. In addition, $W_{a}(\gamma)$ are typically not unitary for non-abelian anyons.

Next, let us consider a non-contractible loop $\gamma$. The restriction of the operators $\{W_{a}(\gamma)\}_a$ to the ground space commute. To see this, note that the action of $W_{a}(\gamma)W_{b}(\gamma)$ on the ground space equals the action of $W_{a}(\gamma)W_{b}(\gamma')$ where $\gamma'$ is a topologically equivalent loop, sufficiently separated from $\gamma$. This is because the action of string operators on the ground space is, by construction, the same for all topologically equivalent paths~\cite{levin2005string,Lin2021Generalized}. As a result, all $W_{a}(\gamma)$ commute on the ground space, and they can be simultaneously diagonalized. We call the set of common eigenstates the MES, and denote them by $\{\ket{a;\gamma}\}_a$. This is an orthonormal basis of the ground-state subspace where each state is labeled by a unique set of eigenvalues of the $W_a(\gamma)$, determining the anyon charge. 

 We argue that the MES in the string-net models share the same properties of the MES in the TC. Consider, for instance, a loop $\gamma_y$ around the $y$ direction and take the expansion of a ground state as in Eq.~\eqref{eq:generic_state}. We claim that the reduced density matrix over a region $A_{y}$ containing $\gamma_y$ is block-diagonal, as in Eq.~\eqref{eq:rho_blocks}. Intuitively, this is true because the WLO mapping one MES onto another cannot be supported on $A_y$. We now give a formal proof.

This claim follows if, for all Pauli strings $P\in A_y$, we have
\begin{equation}
\label{eq:to_prove_1_string_net}
\braket{a;\gamma_y|P| b;\gamma_y} = 0\,, \quad \mathrm{for}\, a\neq b.
\end{equation}
The validity of Eq.~\eqref{eq:to_prove_1_string_net} is less obvious compared to the TC case, because $P$ does not necessarily map Hamiltonian eigenstates into one another. Still, as we show in Appendix~\ref{sec:technical details}, one can adapt the previous reasoning and arrive at the following expression
\begin{equation}
\label{eq:decomposition_to_prove_1}
    P\ket{b;\gamma_y}=\alpha_{\rm ex}\ket{v}_{\rm ex}+ \sum_{d}  \alpha_dW_d(\gamma_y)\ket{b;\gamma_y}\,,
\end{equation}
where $\ket{v}_{\rm ex}$ is a sum of excited states. Note that for the identity particle $a=I$ the WLO is $W_I(\gamma_y) = \mathbbm{1}$. Now, because $a\neq b$ and $W_d(\gamma_y)$ acts diagonally in the MES basis, Eq.~\eqref{eq:to_prove_1_string_net} follows.

We can extend the TC argument in a similar way to prove the factorization property [Eq.~\eqref{eq:factorization}] of the reduced density matrix associated to a MES. Here, one also uses the fact that the excitations created by $P_{A_y}$ are localized around $A_y$, and cannot be annihilated by $P_{B_y}$. Therefore, given an arbitrary ground state $\ket{\psi}$, we obtain the following form for the reduced density matrix 
\be\label{eq:final_decomposition_rho_string_net}
\rho_{A_\gamma B_\gamma} = \bigoplus_{a\in \mathcal{A}} |\psi_a(\gamma)|^2 \rho^a_{A_\gamma} \otimes \rho^a_{B_\gamma}.
\ee
Here, $\gamma$ is a non-contractible loop, while $A_\gamma$ and $B_\gamma$ are disjoint regions that are sufficiently large to support the lattice WLO $W_{a}(\gamma)$.  

Eq.~\eqref{eq:final_decomposition_rho_string_net} is all we need to extend our first main result to string-net models. That is, that the mutual information between well-separated non-contractible regions is sufficient for demonstrating that there is LRN. Indeed, it allows us to reproduce the proof of Lemma~\ref{main_lemma} without modifications, leading to a direct generalization of Theorem~\ref{thrm:lrm}.

\subsection{Constraints on LRN in string-net models from modular transformations}
\label{sec:full_classification_string_nets}

Finally, we explore the extent to which our second result, Theorem~\ref{thrm:srm_tc}, generalizes to string-net models. 

We begin by considering a non-contractible loop $\gamma_y$ along the vertical direction, and two separated, disjoint regions $A$ and $B$ supporting paths on the lattice that are equivalent to $\gamma_y$. Given a ground state $\ket{\psi}$, from the results of the previous section, we have
\begin{equation}\label{eq:mutual_info_gamma_y_1}
I_{A,B}(\ket{\psi}) = H(\{|\psi_{a}(\gamma_y)|^2\}),
\end{equation}
where $\psi_{a}(\gamma_y)$ are the coefficients associated to $\ket{\psi}$ in the MES basis $\{\ket{a;\gamma_y}\}_{a}$. We now ask how the mutual information is modified if we choose disjoint regions corresponding to different non-contractible loops. To answer this question, we need to discuss the modular transformations, already introduced in Sec.~\ref{sec:full_classification_TC}.

We recall that, in a TQFT, any modular transformation $\zeta$ on the torus induces a unitary transformation $M_\zeta$ on the degenerate ground space~\cite{kitaev2006anyons,simon2023topological}. As mentioned, the set of modular transformations is generated by the $\pi/2$ rotation $s$ and Dehn twist $t$, which correspond to the unitary operations $M_t$ and $M_s$, acting on the ground space. The defining property of this representation is that for a modular transformation $\zeta$ the WLO transform as 
\be\label{eq:modular_trafo_wlo}
M_\zeta W_a(\gamma) M_\zeta^\dagger = W_a(\zeta(\gamma)).
\ee
\begin{figure}
    \centering
    \includegraphics[width=1\linewidth]{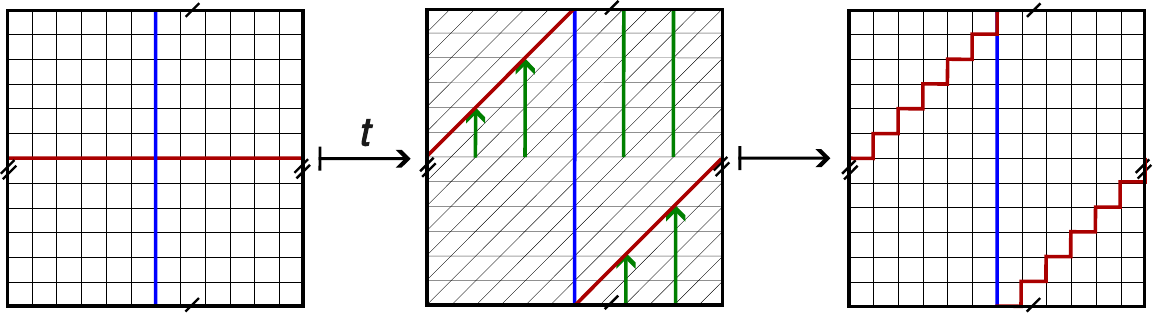}
    \caption{Dehn twist on the torus implemented by a shear transformation (green arrows), followed by a local geometric deformation to restore the lattice. The coordinate grid represents the connectivity of the lattice. Note that the shear transformation is a permutation where in the $n$-th column the vertices are shifted by $n$ steps.}
    \label{fig:dehn_twist_lattice}
\end{figure}

Unfortunately, a modular transformation $\zeta$ typically changes the connectivity of a lattice defined on the torus, and hence of the corresponding microscopic Hamiltonian. Therefore, $\zeta$ is typically not an automorphism of the lattice model, as is evident if one considers the lattice permutation implementing the shear of the lattice associated to $t$~\cite{zhu2020instantaneous,smith2020intrinsic}, which we sketch in Fig.~\ref{fig:dehn_twist_lattice}.

In order to gain an intuition, let us first neglect this fact, and make the simplifying  assumption that $\zeta$ is an automorphism of the lattice. This is for example the case for a $2\pi/3$ rotation on the honeycomb lattice. Let $\tilde\gamma$ be any non-contractible loop that is topologically inequivalent to $\gamma_y$, and let $\zeta$ be a modular transformation mapping $\tilde\gamma$ to $\gamma_y$.\footnote{The most general $\tilde\gamma$ is a curve winding $m$-times the $x$-direction and $n$-times the $y-$direction with $\gcd(m,n)=1$. This condition is equal to the property that $\tilde\gamma$ can be deformed into a non-self-intersecting curve. Such a non-contractible curve is called a simple curve and modular transformations map simple curves to simple curves \cite{farb2011primer,beverland2016protected,zhu2020instantaneous}} Under the assumption that $\zeta$ is an automorphism of the lattice, there exists a permutation of the lattice $\mathcal{P}_\zeta$\footnote{In the notation we don't distinguish between the permutation and its unitary representation on the Hilbert space} implementing $\zeta$, such that
\begin{equation}\label{eq:wrong_assumtion}
\mathcal{P}_\zeta|\psi\rangle =M_\zeta|\psi\rangle\,,
\end{equation}
for a ground state $\ket{\psi}$. Below, we explore the consequences of this assumption.
\begin{figure}
    \centering
\includegraphics[width=0.9\linewidth]{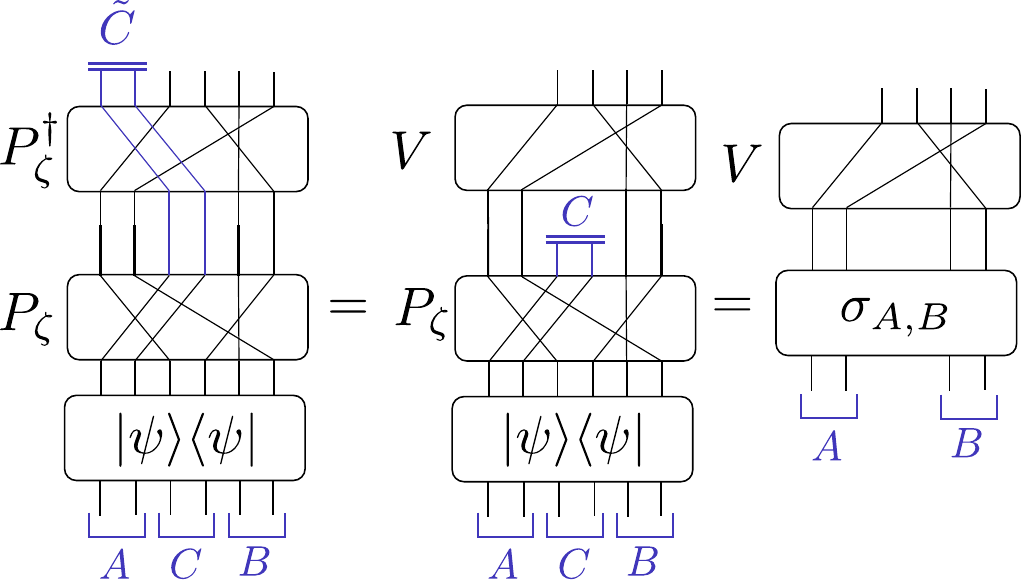}
    \caption{Pictorial derivation of Eq.~\eqref{eq:rho_barAbarB}. We only show the ``ket" layer. The horizontal double lines (colored) indicate a trace over $\tilde{C}$ on the left, and over $C$ in the middle pannel.}
    \label{fig:tn_trace}
\end{figure}

First, the inverse permutation $\mathcal{P}^{-1}_\zeta$ maps the regions $A$, $B$ (containing $\gamma_y$) into regions $\tilde A$, $\tilde B$ containing curves that are topologically equivalent to $\tilde \gamma$. Therefore, denoting by $C$ ($\tilde{C}$) the complement of $A\cup B$ ($\tilde A \cup \tilde B$), we have
\begin{align}\label{eq:rho_barAbarB}
\rho_{\tilde A\tilde B}&={\rm Tr}_{\tilde {C}}[\ket{\psi}\bra{\psi}]\nonumber\\
&={\rm Tr}_{\tilde {C}}\left[\mathcal{P}_\zeta^\dagger \mathcal{P}_\zeta\ket{\psi}\bra{\psi}\mathcal{P}^\dagger_\zeta\mathcal{P}_\zeta\right]\nonumber\\
&= V\left({\rm Tr}_C[\mathcal{P}_{\zeta}\ket{\psi}\bra{\psi}\mathcal{P}^\dagger_{\zeta}]\right)V^\dagger\,.
\end{align}
This is shown graphically in Fig.~\ref{fig:tn_trace}. Here, because $P_\zeta$ is a permutation, $V =V_{\tilde{A}\leftarrow A}\otimes V_{\tilde{B}\leftarrow B}$ is an isometry of the form
\be V:\HH_A\otimes \HH_B \to \HH_{\tilde{A}}\otimes \HH_{\tilde{B}}\ee 
with $\HH_R$ denoting the Hilbert space associated with the qubits of region $R$. Thus, setting
\be
\rho_{\tilde A\tilde B} = V\sigma_{AB} V^\dagger
\ee
it follows that the mutual information of $\rho_{\tilde A\tilde B}$  between regions $\tilde{A},\tilde{B}$ equals that of $\sigma_{AB}$ between regions $A,B$, \textit{i.e.}, we have shown
\be
I_{P_\zeta^{-1}(A),P_\zeta^{-1}(B)}(\ket{\psi}) = I_{\tilde{A},\tilde{B}}(\ket{\psi}) = I_{A,B}(P_\zeta\ket{\psi}).
\ee

From Eqs.~\eqref{eq:mutual_info_gamma_y_1} and~\eqref{eq:wrong_assumtion}, we arrive at
\begin{equation}\label{eq:final_result_i_transform}
I_{\tilde A,\tilde B}(\ket{\psi}) = H(\{|\tilde{\psi}_{a}|^2\})\,,
\end{equation}
where 
\begin{equation}\label{eq:final_transform}
    \tilde{\psi}_{b}= \sum_b(M_\zeta)_{b,a} \psi_{a}(\gamma_y)\,.
\end{equation}
Note that $\tilde{\psi}_{b} = \psi_b(\tilde{\gamma})$ are the coefficients of $\ket{\psi}$ in the basis $\ket{a;\tilde{\gamma}}$, as dictated by Eq.~\eqref{eq:modular_trafo_wlo}.
In words, Eq.~\eqref{eq:final_result_i_transform} states that the mutual information with respect to $\tilde{A}$ and $\tilde{B}$ is the classical Shannon entropy of the coefficients $\{|\tilde{\psi}_{a}|^2\}_a$, where $\tilde{\psi}_{a}$ is obtained by applying $M_\zeta$ to the vector $\{\psi_{b}(\gamma_y)\}_b$. 

In turn, this would lead us to the following conclusion: an encoded state corresponding to the vector $\{\psi_a(\gamma_y)\}$ has LRN if the Shannon entropy of the coefficients
$|\sum_a(M_\zeta)_{b,a} \psi_a(\gamma_y)|^2$ is not an integer for at least one modular transformation. Indeed, in this case it means that we can find two regions $\tilde{A}$ and $\tilde{B}$, with respect to which the mutual information of our state $\ket{\psi}$ is not an integer.

If $\zeta$ is not an automorphism of the lattice, the assumption in Eq.~\eqref{eq:wrong_assumtion} is not correct. However, we may arrive at the same conclusion by making use of the exact lattice  implementation of the modular transformations in string-net models~\cite{zhu2020instantaneous,zhu2020quantum,smith2020intrinsic}.  Specifically, given a modular transformation $\zeta$, Ref.~\cite{zhu2020instantaneous} provides an explicit construction for a unitary operation on the full Hilbert space that implements $M_\zeta$ when restricted to the ground space. The construction requires introducing an ancillary lattice made of $N_\alpha$ sites, whose precise geometry is irrelevant for our discussion. Then, one of the results of Ref.~\cite{zhu2020instantaneous} implies the following statement: for any $\zeta$, there exists a permutation operator $\mathcal{P}_\zeta$ and a finite-depth local unitary circuit $\mathcal{U}_\zeta$, such that\footnote{In the original construction of Ref.~\cite{zhu2020instantaneous}, the order of $\mathcal{P}_\zeta$ and $\mathcal{U}_\zeta$ is exchanged. However, as stated there, one can reverse the order by modifying the circuit. We use the ordering of Eq.~\eqref{eq:correct_implementation} as it simplifies our technical analysis.}
\begin{equation}
\mathcal{U}_\zeta\mathcal{P}_\zeta(|\psi\rangle\otimes \ket{0}^{N_\alpha} )=\left(M_{\zeta}|\psi\rangle\right)\otimes \ket{0}^{N_\alpha}\,,
\label{eq:correct_implementation}
\end{equation}
where $\ket{0}^{N_\alpha}$ denotes a product state in the ancilla Hilbert space. Note that $\mathcal{U}_\zeta$ and $\mathcal{P}_\zeta$ now both act on the composite lattice made of both the original and ancillary qubits. However, the composite action of $\mathcal{U}_\zeta$ and $\mathcal{P}_\zeta$ leave the ancillas in a product state.

Eq.~\eqref{eq:correct_implementation} replaces the assumption in Eq.~\eqref{eq:wrong_assumtion} in general. It features an additional local operator $\mathcal{U}_\zeta$, which compensates for the lattice deformation caused by the permutation associated with the geometrical modular transformation on the torus, cf.~Fig.~\ref{fig:dehn_twist_lattice}. As we show in Appendix~\ref{sec:proof_mutual_info_change_stringnets}, we can now repeat the previous arguments, up to mild additional technical complications. As a result, we arrive at our second main result characterizing LRN in string-net models, that we state here in the form of a theorem.
\begin{thm}\label{thrm:classification_string_nets}
    Let $\ket{\psi}$ be a ground state of the string-net Hamiltonian in Eq.~\eqref{eq:string_net_hamiltonian}. Fix the MES basis $\{\ket{a;\gamma_y}\}_a$, and denote by $\{\psi_a(\gamma_y)\}_a$ the coefficients of $\ket{\psi}$ in this basis. Denote by $M_\zeta$ a modular operator, obtained as an arbitrary product of $T$ and $S$ matrices, and denote by $\psi^\zeta_a=\sum_b (M_\zeta)_{a,b}\psi_b(\gamma_y)$. Then, if 
    \begin{equation}
       H(\{|\psi_a^\zeta|^2\})\notin \mathbb{N} 
    \end{equation}
    for some $\zeta$, $\ket{\psi}$ has LRN.
\end{thm}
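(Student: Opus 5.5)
The strategy is to reduce the statement to the string-net version of Theorem~\ref{thrm:lrm}, which the previous subsection shows follows from Eq.~\eqref{eq:final_decomposition_rho_string_net} via Lemma~\ref{main_lemma}. What is needed is a pair of well-separated regions $\tilde A,\tilde B$ for which $I_{\tilde A,\tilde B}(\ket{\psi}) = H(\{|\psi^\zeta_a|^2\})$. We take $A,B$ to be separated regions of width $\gg 1$ containing loops equivalent to $\gamma_y$. We then set $\tilde A=\mathcal{P}_\zeta^{-1}(A)$ and $\tilde B=\mathcal{P}_\zeta^{-1}(B)$, restricted to the original (non-ancillary) lattice. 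These contain loops equivalent to $\zeta^{-1}(\gamma_y)$, and since a modular transformation is a homeomorphism, they remain separated by $O(L)$. Once $I_{\tilde A,\tilde B}(\ket{\psi})=H(\{|\psi^\zeta_a|^2\})\notin\mathbb{N}$ is established, LRN follows exactly as in Theorem~\ref{thrm:lrm}. That argument uses invariance under shallow circuits (Lemma~\ref{main_lemma}, applied with the loop $\tilde\gamma$ and the MES decomposition for it), integrality for stabilizer states, and Fannes--Audenaert continuity.

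The core step is to compute $I_{\tilde A,\tilde B}(\ket{\psi})$ using Eq.~\eqref{eq:correct_implementation}. First, adjoining the product ancilla $\ket{0}^{N_\alpha}$ does not change the mutual information, even if we enlarge $\tilde A,\tilde B$ by ancilla sites, because the ancillas factor out. Next, the argument around Fig.~\ref{fig:tn_trace} and Eq.~\eqref{eq:rho_barAbarB} shows that the permutation gives
\begin{equation}
I_{\tilde A,\tilde B}(\ket{\psi}\otimes\ket{0}^{N_\alpha}) = I_{A',B'}\big(\mathcal{P}_\zeta(\ket{\psi}\otimes\ket{0}^{N_\alpha})\big),
\end{equation}
where $A'$ and $B'$ are the images of $\tilde A$ and $\tilde B$ on the composite lattice. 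Then $\mathcal{P}_\zeta(\ket{\psi}\otimes\ket{0}^{N_\alpha})=\mathcal{U}_\zeta^\dagger\big(M_\zeta\ket{\psi}\otimes\ket{0}^{N_\alpha}\big)$. The state $M_\zeta\ket{\psi}\otimes\ket{0}^{N_\alpha}$ is a ground state of the string-net model times a product state. Its MES coefficients along $\gamma_y$ are $\psi^\zeta_a$, by Eq.~\eqref{eq:modular_trafo_wlo}, with the convention fixed so that $\psi^\zeta_a=\sum_b (M_\zeta)_{a,b}\psi_b(\gamma_y)$. Hence Eq.~\eqref{eq:final_decomposition_rho_string_net} gives mutual information $H(\{|\psi^\zeta_a|^2\})$ on regions containing $\gamma_y$-type loops.

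The main obstacle is removing $\mathcal{U}_\zeta^\dagger$, which is finite-depth but acts on the composite lattice. We would adapt the proof of Lemma~\ref{main_lemma}, which the previous subsection says only needs the block decomposition~\eqref{eq:final_decomposition_rho_string_net}. Concretely, we choose $A',B'$ thickened by more than twice the depth of $\mathcal{U}_\zeta$, so that the circuit's light cones from $A'$ and from $B'$ are disjoint. We then use the standard sandwich argument: data processing for the partial trace gives one inequality, and applying the same bound to $\mathcal{U}_\zeta$ on slightly shrunken regions gives the reverse. This works because the Shannon-entropy value does not depend on the particular width once the regions support the WLOs. Some care is needed because the ancilla sites are interleaved with the original sites. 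The fix is to treat the product ancilla as part of a trivially extended string-net ground state, so that the factorization property still holds on regions that include ancillas. Finally, the Fannes--Audenaert step must be carried over to families in $N$, and this uses that $\mathcal{U}_\zeta$ has $O(1)$ depth, so that width conditions like $\mathrm{width}\geq 2D_Q$ are satisfiable with $D_Q=O(\mathrm{polylog}\,L)$.
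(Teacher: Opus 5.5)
Your proposal is correct and follows essentially the same route as the paper's proof in Appendix~\ref{sec:proof_mutual_info_change_stringnets}: you relabel via the permutation isometry, factor out the product ancillas, and strip off the finite-depth circuit $\mathcal{U}_\zeta$ using the shallow-circuit invariance of Lemma~\ref{main_lemma} applied to the ground state $(M_\zeta\ket{\psi})\otimes\ket{0}^{N_\alpha}$, which yields $I_{\tilde A,\tilde B}(\ket{\psi})=I_{A,B}(M_\zeta\ket{\psi})=H(\{|\psi^\zeta_a|^2\})$. One minor correction: the $O(L)$ separation and comparable widths of $\tilde A,\tilde B$ do not follow from $\zeta$ being a homeomorphism, but from the lattice shear $\mathcal{P}_\zeta$ having bounded metric distortion, which is what the paper invokes.
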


As a final remark, we emphasize that, while the depth of the circuit $\mathcal{U}_\zeta$ is finite and independent of $N$, in general it depends on the modular transformation. Therefore, for a given $\zeta$, the system should be considered to be ``large enough'', namely large compared to the depth of the circuit. Because LRN is an asymptotic property, this is not a restriction. 

\subsection{Implications for the Fibonacci string-net model}

In this section, we discuss the consequences of Theorem~\ref{thrm:classification_string_nets} for the doubled Fibonacci string-net model. As we will show, contrary to the $\mathbb{Z}_2$ TC, we will not be able to fully classify LRN. While we can establish LRN for typical states and in particular the MES, we will also obtain families of states with integer mutual information for all allowed regions, although LRN is expected. This fact manifests a limitation of our elementary approach based on the mutual information. 

We begin by fixing the MES basis 
\begin{equation}\label{eq:b_y}
    \mathcal{B}_y=\{|I;\gamma_y\rangle, |\tau;\gamma_y\rangle,|\bar \tau;\gamma_y\rangle,|\tau \bar \tau;\gamma_y\rangle\}\,.
\end{equation}
The $S$ and $T$ matrices of the model read~\cite{simon2023topological}
\be
S
=
\frac{1}{1+\phi^2}
\begin{pmatrix}
1 & \phi & \phi & \phi^2 \\
\phi & -1 & \phi^2 & -\phi \\
\phi & \phi^2 & -1 & -\phi \\
\phi^2 & -\phi & -\phi & 1
\end{pmatrix}\,,
\ee  
where $\phi=(1+\sqrt{5})/2$, and 
\be 
T=
\begin{pmatrix}
1 & 0 & 0 & 0 \\
0 & e^{\frac{4\pi i}{5}} & 0 & 0 \\
0 & 0 & e^{-\frac{4\pi i}{5}} & 0 \\
0 & 0 & 0 & 1
\end{pmatrix}\,.
\ee
The MES correspond to vectors $\psi_a(\gamma_y) = \delta_{a,b}$. Transforming to the $\gamma_x$-basis using the $S$ matrix, we obtain $\psi_a(\gamma_x)$ as the columns of the $S$ matrix. For those vectors, we find that $H(\{|\psi_a(\gamma_x)|^2\})$ is not integer valued and we conclude LRN for the MES. Consider now superpositions of the form 
\be
\alpha \ket{I;\gamma_y} + \beta \ket{\tau;\gamma_y}
\ee
corresponding to $\psi_a(\gamma_y) = (\alpha,\beta,0,0)$. This family of states has LRN: For choices of $\alpha,\beta$ such that $H(\{|\psi_a(\gamma_x)|^2\})$ is integer valued, we find that under transformation of $S$ or $ST$ the quantization is not preserved.

Let us now discuss cases where our approach is inconclusive.
First, it can be checked that the state
\be 
\ket{v_1}=\frac{1}{\sqrt{2}} \left(\ket{I;\gamma_y} + \ket{\tau\bar{\tau};\gamma_y}\right)\,,
\ee 
is a simultaneous eigenstate of $S$ and $T$ with eigenvalue one. In fact, one can show that it is the only common eigenstate. Going further, we define the following states 
\begin{align}
\ket{v_2}&=\frac{1}{\sqrt{2}} \left(\ket{\tau;\gamma_y} + \ket{\bar{\tau};\gamma_y}\right)\,, \\
\ket{v_3}&=\frac{i}{\sqrt{2}} \left(\ket{\tau;\gamma_y} - \ket{\bar{\tau};\gamma_y}\right)\,,\\
\ket{v_4}&=\frac{1}{\sqrt{2}} \left(\ket{I;\gamma_y} - \ket{\tau\bar{\tau};\gamma_y}\right)\,,
\end{align}
which, together with $\ket{v_1}$, form an orthogonal basis of the ground space. 
The restriction of $S$ and $T$ to the subspace $K={\rm span}(\ket{v_2},\ket{v_3},\ket{v_4})$ yields
\be
S\big|_{K} =
\begin{pmatrix}
\dfrac{1}{\sqrt{5}} & 0 & \dfrac{2}{\sqrt{5}} \\
0 & -1 & 0 \\
\dfrac{2}{\sqrt{5}} & 0 & -\dfrac{1}{\sqrt{5}}
\end{pmatrix}\,,
\ee
and
\be
T\big|_{K}  = 
\begin{pmatrix}
\cos(4\pi/5) & \sin(4\pi/5) & 0 \\
-\sin(4\pi/5)  & \cos(4\pi/5) & 0 \\
0 & 0 & 1
\end{pmatrix}\,.
\ee

Since $\ket{v_1}$ is fixed by the modular group, the mutual information between disconnected regions is an integer for all choices of non-contractible loops. Therefore, our approach is not able to detect LRN. Still, one could hope that the restricted matrices $T\big|_{K}$ and $S\big|_{K}$ generate a dense subset of matrices, so that one could rule out the existence of integer mutual information for states with $|v_1|\neq 1$. Unfortunately, however, this is not the case. Indeed, one can check that the group generated by $T\big|_{K}$ and $S\big|_{K}$ contains $60$ elements and is thus finite, as they form a presentation for the Icosahedral symmetry~\cite{freedman2001two}. The modular matrices generating a dense set would require a different manifold than the torus~\cite{freedman2001two,ng2010congruence}. 

Going further, consider the orbit of the following state under modular transformations:
\begin{equation}
\label{eq:w_states}
    \ket{w}=\frac{1}{2}(\ket{I}+\omega^2 |\tau\rangle +\omega^2 |\bar \tau\rangle +\omega \ket{\tau \bar\tau} )\,,
\end{equation}
where $\omega$ is a sixth root of unity, namely $\omega^6=1$. While $\ket{w}$ is not invariant under the modular group, it is possible to check that all twelve of the states in its orbit have integer mutual information.

In summary, in the double Fibonacci model, the mutual information is not able to fully classify LRN. It is worth noting, however, that while the states $\{M\ket{w}\}_M$, where $M$ is a modular matrix and $\ket{w}$ is from Eq.~\eqref{eq:w_states}, span the whole encoded subspace, they are not orthogonal to one another. Therefore, one could wonder whether it is possible to find a orthogonal basis of the encoded space such that each element has quantized mutual information. We conjecture that this is not possible, although we were not able to prove it. If the conjecture were true, then one could establish LRN of any ground state using the robust ``infectiousness" property proven in Refs.~\cite{wei2025long,parham2025quantum}. This theorem states that if at least one ground state of a local Hamiltonian with topological order (see Refs.~\cite{wei2025long,parham2025quantum} for the technical assumptions, which apply to string-net models) has short-range nonstabilizerness, then the full ground-state subspace is mapped to a topological stabilizer code~\cite{bravyi2013classification,bravyi2009nogo,ellison2022pauli} via a shallow-depth local quantum circuit. In particular, this implies that the ground-state subspace has a basis of short-range nonstabilizer states with integer mutual information on regions such as in Fig.~\ref{fig:ttt}.

\section{Discussions}\label{sec:open_ends}

We expect the results derived for the $\ZZ_2$ TC and the doubled Fibonacci string-net model to extend to a broad class of anyon models. More generally, however, subtle details may arise, requiring dedicated analyses. For instance, in the abelian $\ZZ_d$ TC, the  $\ZZ_d$ stabilizers may be intricate for non-prime local dimension $d$~\cite{ellison2022pauli}. It is an interesting question whether our approach can completely classify LRN in this case.

We would also like to point out that LRN is tied to the notion of fault tolerant gates; specifically,~logical gates induced by a shallow-depth local quantum circuit (QC) \cite{bravyi2013classification}. For example, knowing that the $T$ state in the $
\ZZ_2$ toric code has LRN implies that the $T$ gate cannot be realized fault tolerantly. Explicitly, consider the $T$ state encoded in the first logical qubit from Eq.\eqref{eq: T state}:
\begin{align}    
\ket{T}_L &= \frac{1}{\sqrt{2}} \left( \ket{00}_L + \exp(i\pi/4) \ket{10}_L\right)\,.\\
&= T_L\ket{+,0}_L
\end{align}
Since $\ket{+,0}_L$ is a stabilizer state and we proved that $\ket{T}_L$ has LRN, the logical $T$ gate cannot be realized in shallow-depth with a local QC, not even approximately.

In Theorem~\ref{thrm:srm_tc} we showed that all short-range non-stabilizer states in the $
\ZZ_2$ TC are stabilizer states. Thus, using the same argument, we obtain that if a logical gate is implemented with a shallow-depth circuit, it must be a Clifford unitary operator. This result coincides with the Bravyi-K\"onig Theorem~\cite{bravyi2013classification}.

Furthermore, we can characterize fault tolerant gates more systematically, so as to make a statement that applies to any two-dimensional topological order. Indeed, we have shown that a shallow-depth local QC leaves the mutual information $I_{A_\gamma,B_\gamma}[\ket{\psi}] = H(\{|\psi_a(\gamma)|^2\})$ of a ground state $\ket{\psi}$ invariant. Since this is true for any ground state $\ket{\psi}$ and any choice of $\gamma$, on the ground-state subspace (code space), the QC can only induce a diagonal phase gate composed with a permutation $\sigma$ of the anyon labels:
\be\label{eq:ftg}
\psi_a(\gamma) \to e^{i\phi_a}\psi_{\sigma(a)}(\gamma).
\ee

Indeed, such a unitary does not change the value of the Shannon entropy $H(\{|\psi_a(\gamma)|^2\})$. Conversely, if for all $\psi_a(
\gamma)$ a unitary $U$ mapping ground states to ground states satisfies
\begin{align}
H(\{|\psi_a(\gamma)|^2\}) &= H(\{
|\phi_a(\gamma)|^2\}), \\
\phi_a(\gamma)&=\sum_{b}U_{a,b}\psi_b(\gamma),
\end{align}
 then choosing $\psi_a(
\gamma) = 
\delta_{a,b}$ implies that the $b$-th column of $U$ has a single non-zero entry of magnitude one. Since this must hold for any column and $U$ is a unitary, Eq.\eqref{eq:ftg} follows. This is the (central result of the) classification  of fault-tolerant gates of TQFTs from Ref.~\cite{beverland2016protected}. 

The ramifications of Eq.~\eqref{eq:ftg} depend on the specific modular matrices of a given TQFT~\cite{beverland2016protected}. For example, for the TC, Ref.~\cite{beverland2016protected} proves that only a subset of Clifford unitaries are allowed, while for the doubled Fibonacci phase any consistent solution of Eq.~\eqref{eq:ftg} is proportional to the identity \cite{li2026explicit}, \emph{i.e.}, there are no non-trivial fault-tolerant gates.

The proof in Ref.~\cite{beverland2016protected} shares similarities to
our mutual-information based approach, but includes additional assumptions. In particular, it is assumes that the shallow QC preserves the ground-state subspace exactly. As we can allow for an infidelity that goes to zero in the thermodynamic limit (as in the proof of Theorem~\ref{thrm:lrm}), we have proven that, on the torus, the classification in Ref.~\cite{beverland2016protected} is robust. Additionally, our results are entirely formulated on the lattice, as opposed to the continuum. 

Finally, the proof in Ref.~\cite{beverland2016protected} does not, as stated by the authors, generalize to higher dimensions. The invariance of the mutual information, on the other hand, can give constraints on fault-tolerant gates in higher dimensional topological codes. It may also be valuable for assessing the allowable gates in topological codes with defects or various boundary conditions. We leave the investigation of this direction for future work.

\section{Conclusions and Outlook}
\label{sec:outlook}

In this work, we considered an elementary approach for detecting and classifying LRN in the ground space of topologically ordered Hamiltonians based on mutual information. In particular, we considered the mutual information between non-overlapping regions containing non-contractible loops and studied the change of the mutual information under modular real-space transformations. We have exemplified our approach in the abelian toric code and the non-abelian string-net model with Fibonacci anyons. In the former case, it provides a full classification, certifying LRN for all encoded non-stabilizer states. In the latter case, instead, our approach is able to detect LRN for all states except for a finite subset with special properties under the modular group. In addition, we have discussed how our results constrain fault-tolerant gates, replicating the classification from Ref.~\cite{beverland2016protected} for the torus geometry.

Our work has implications beyond string-net models, since any ground state of a Hamiltonian in a topological phase can be mapped, under some technical assumptions, to a string-net ground state with a shallow QC \cite{kim2024classifying}. Thus, such a state has LRN if and only if the corresponding string-net state has LRN. As a consequence, our main result, Theorem~\ref{thrm:classification_string_nets}, applies to any state in a topological phase and its implications depend only on the $S$ and $T$ matrices.

Next, while we have discussed only the torus geometry and $\ZZ_2$ stabilizers, the calculation of the mutual information generalizes to different topologies, including those with defects, and in higher dimensions \cite{jian2015long}. Therefore, an extension of the definitions and theorems to different topologies and the $\ZZ_d$ stabilizer formalism should be possible.  On the other hand, in higher dimensions the implications are likely more subtle. For example, unlike in two spatial dimensions, there are topological stabilizer codes that admit fault tolerant (transversal) logical non-Clifford gates. Thus, acting with such a gate on a stabilizer state yields a non-stabilizer state with short-range nonstabilizerness~\cite{bombin2015gaugecolorcodesoptimal,Kubica2015transversalgatesincolorcodes,bombin2018transversalgateserrorpropagation, Vasmer2019CCZ}.

Finally, one could study implications of our results for the ``magic hierachy" \cite{parham2025quantum,li2026explicit} and it is natural to ask whether our discussion could be framed in terms of the projected-entangled-pair-state (PEPS) representation of the ground-state wave functions~\cite{verstraete2004renormalization,buerschaper2009explicit}. It would be especially interesting, for example, if the LRN could be related to non-standard (generalized) symmetries of the PEPS \cite{bultinck2017anyons}.
We leave these questions for future research.\\

{\bf Note added}. During the completion of this draft, the pre-print Ref.~\cite{zhang2026extensive} appeared on the arXiv. Based on the entanglement bootstrap axioms, it proves rigorously that the ground-space of non-abelian string-net models display LRN. Our approach is different and is based exclusively on the analysis of the mutual information between disconnected regions. In particular, it also classifies LRN in the ground space of the $\ZZ_2$-TC.}

\section*{Acknowledgments}
We acknowledge the use of AI tools (ChatGPT) to find the example in Eq.~\eqref{eq:w_states}. We thank Natalie Parham, Norbert Schuch, Angelo Lucia, Christopher D. White and Riccardo Cioli for discussions. TDE thanks Yuzhen Zhang and Daniel Ranard for inspiring conversations. TDE, LP, and DTS acknowledge hospitality from the Simons Center for Geometry and Physics, Stony Brook University, during the ``Workshop on
Quantum information dynamics and nonequilibrium quantum matter'', where part of the research for this manuscript was
performed. The work of LP and DK was funded by the European Union (ERC, QUANTHEM, 101114881). Views and opinions expressed are however those of the author(s) only and do not necessarily reflect those of the European Union or the European Research Council Executive Agency. Neither the European Union nor the granting authority can be held responsible for them.  Research at Perimeter Institute is supported in part by the Government of Canada through the Department of Innovation, Science, and Economic Development, and by the Province of Ontario through the Ministry of Colleges and Universities. \\

\appendix

\section{Invariance of mutual information}\label{sec:lemmas_mut_info}

In this appendix, we use the same notations as in the main text. Further, for a region $A_\gamma$ containing a non-contractible loop $\gamma$, we define the width of $A_\gamma$ as the length of the shortest path connecting its two distinct boundaries, and for two such regions $A_\gamma,B_\gamma$ the distance $\mathrm{dist}(A_\gamma,B_\gamma)$ is the length of the shortest path connecting $A_\gamma$ and $B_\gamma$.
\begin{lem}\label{lemma:mut_info_invariance}
    Let $A_\gamma$, $B_\gamma$ be two disjoint regions containing curves topologically equivalent to a non-contractible curve $\gamma$. Suppose $Q$ is a shallow QC of depth $D_Q = O(\mathrm{polylog}(N))$ and $\mathrm{width}(A_\gamma)  = \mathrm{width}(B_\gamma) \geq O(2D_Q)$. Suppose also that the complement $C$ of $A\cup B$ is such that $\mathrm{dist}(A_\gamma,B_\gamma)= O(L)\gg \mathrm{width}(A_\gamma)$, $L={\rm min}(L_x,L_y)$ and $L_x$, $L_y$ are the linear dimensions of the system, namely $N=L_x L_y$. For any state $\ket{\psi}$ such that the reduced density matrix satisfies
    \be
\rho_{A_{\gamma
} B_\gamma} = \bigoplus_a |\psi_{a}(\gamma)|^2 \rho^a_{A_\gamma} \otimes \rho^a_{B_\gamma}\,,
\ee
 the mutual information of $Q\ket{\psi}$ equals that of $\ket{\psi}$
\be \label{eq:statement_of_lemma_1}
I_{A_\gamma,B_\gamma}(Q\ket{\psi}) = I_{A_\gamma,B_\gamma}(\ket{\psi}).
\ee
\end{lem}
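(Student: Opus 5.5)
We will prove the two inequalities $I_{A,B}(Q\ket{\psi})\le I_{A,B}(\ket{\psi})$ and $I_{A,B}(Q\ket{\psi})\ge I_{A,B}(\ket{\psi})$, writing $A=A_\gamma$, $B=B_\gamma$. Both follow from the data-processing inequality for mutual information under local channels, combined with the block structure of $\rho_{AB}$.

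The first step is a light-cone decomposition of $Q$. Because $Q$ has depth $D_Q$ with nearest-neighbour gates, and $\mathrm{dist}(A,B)\gg\mathrm{width}(A)\ge 2D_Q$, we can write
\[
Q = (U_{A^+}\otimes U_{B^+}\otimes U_{C^-})\,(W_{C'})
\]
up to reordering of gates. Here $A^+,B^+$ are the $D_Q$-neighbourhoods of $A,B$, and the gates acting across the boundaries are collected into circuits supported in a collar around $\partial A$ and $\partial B$. The cleaner route is the standard one of Ref.~\cite{korbany2025long}. The reduced state of $Q\ket{\psi}$ on $AB$ is obtained from the reduced state of $\ket{\psi}$ on the enlarged regions $A^+B^+$ (each fattened by $D_Q$) through a product channel $\mathcal{E}_A\otimes\mathcal{E}_B$. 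Since $A^+$ and $B^+$ are still far apart, this channel factorizes. Data processing then gives
\[
I_{A,B}(Q\ket{\psi})\le I_{A^+,B^+}(\ket{\psi}).
\]
Applying the same argument to $Q^\dagger$ acting on $Q\ket{\psi}$, with shrunken regions $A^-,B^-$, gives
\[
I_{A^-,B^-}(\ket{\psi})\le I_{A,B}(Q\ket{\psi}).
\]
The width hypothesis is what guarantees that $A^-$ and $B^-$ still contain a curve equivalent to $\gamma$.

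The second step uses the hypothesis on the block structure. This structure holds for any pair of regions that contain a curve equivalent to $\gamma$, are far apart, and are large enough to support the WLO. Applying it to $A^\pm,B^\pm$, the computation leading to Eq.~\eqref{eq:mut_info_Shannon} gives
\[
I_{A^+,B^+}(\ket{\psi})=I_{A^-,B^-}(\ket{\psi})=H(\{|\psi_a(\gamma)|^2\}),
\]
which is independent of the region. The two inequalities from the first step then collapse, and we obtain $I_{A,B}(Q\ket{\psi})=H(\{|\psi_a(\gamma)|^2\})=I_{A,B}(\ket{\psi})$.

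The main obstacle is to justify that the decomposition applies uniformly to the enlarged and shrunken regions. The lemma as stated only assumes it for $A,B$ themselves. We would first try to argue that the derivation of Eq.~\eqref{eq:final_decomposition_rho_string_net} depends only on the regions containing a non-contractible curve equivalent to $\gamma$ of width at least the WLO support, and being far apart. This lets us reapply it to $A^\pm,B^\pm$, and the width condition $\ge 2D_Q$ ensures these regions remain admissible.

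Failing that, we would use an alternative based on monotonicity of mutual information under partial trace. For nested regions $A^-\subset A\subset A^+$, monotonicity gives $I_{A^-,B^-}\le I_{A,B}\le I_{A^+,B^+}$. Combined with the Shannon lower bound on $I_{A^-,B^-}$ and the upper bound on $I_{A^+,B^+}$, both read off from the orthogonal block structure, this would close the sandwich.
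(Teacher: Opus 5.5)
Your main argument is correct, but it is not the paper's route. The paper compactifies the torus into a 1D ring of column supersites, so that $A_\gamma,B_\gamma$ become intervals and $Q$ stays shallow. It then invokes the 1D statement of Ref.~\cite{korbany2025long}, outsourcing the actual mechanism and leaving curves other than $\gamma_y$ as ``straightforward''.

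You argue directly on the torus. The Heisenberg light cone of $Q$ gives $\rho_{AB}(Q\ket{\psi})=(\mathcal{E}_A\otimes\mathcal{E}_B)(\rho_{A^+B^+}(\ket{\psi}))$, and likewise for $Q^\dagger$ with $A^-\subset A$. Data processing then sandwiches $I_{A,B}(Q\ket{\psi})$ between $I_{A^-,B^-}(\ket{\psi})$ and $I_{A^+,B^+}(\ket{\psi})$. Both equal $H(\{|\psi_a(\gamma)|^2\})$, because the block weights are MES coefficients and hence region-independent.

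What your route buys:
\begin{itemize}
\item It is self-contained and works verbatim for any non-contractible $\gamma$, with no compactification along diagonal strips.
\item It makes each hypothesis's role explicit. $\mathrm{dist}(A,B)\gg D_Q$ keeps $A^+,B^+$ disjoint, so the channel factorizes. $\mathrm{width}\geq 2D_Q$, plus the WLO width absorbed by the $O(\cdot)$, keeps $A^-,B^-$ admissible.
\item It exposes what the paper's statement, and the 1D statement it cites, leave implicit: the block decomposition must hold on the enlarged and shrunken regions, not only on $A_\gamma,B_\gamma$.
\end{itemize}
For TC and string-net ground states this last requirement is automatic. The derivation of Eq.~\eqref{eq:final_decomposition_rho_string_net} applies to any admissible pair of far-apart regions, so your first resolution of the ``main obstacle'' is the right one. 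Your opening attempt to factor $Q$ into unitaries on $A^+$, $B^+$, $C^-$ and a collar is unnecessary and not valid in general. The Heisenberg-picture channel argument is all you need.

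Drop the fallback, though; it cannot work. Monotonicity under partial trace gives $I_{A^-,B^-}(\ket{\psi})\le I_{A,B}(\ket{\psi})=H(\{p_a\})\le I_{A^+,B^+}(\ket{\psi})$. That is the opposite of what you need, namely an upper bound on $I_{A^+,B^+}$ and a lower bound on $I_{A^-,B^-}$. The block structure on $A,B$ alone cannot supply these.

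In fact no argument can, because the lemma with the decomposition assumed only on $A,B$ is false for general states. Take $\ket{\psi}$ to be a product state on $A\cup B$, together with long-range Bell pairs joining qubits just outside $A$ to qubits just outside $B$. Then $\rho_{AB}$ has the required single-block form with $I_{A,B}=0$. Yet one layer of swaps moves the Bell pairs into $A$ and $B$ and makes $I_{A,B}(Q\ket{\psi})>0$. The hypothesis must therefore be imposed on the $D_Q$-enlarged and $D_Q$-shrunken regions as well, as it is implicitly for the ground states the paper cares about.
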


We prove this lemma by viewing the torus as a one-dimensional system, where the regions $A_\gamma,B_\gamma$ become local (connected) intervals of the 1D chain. 

\begin{proof}

For simplicity of notation, we will prove the theorem choosing $\gamma$ along the vertical direction, $\gamma=\gamma_y$, but the generalization to an arbitrary curve is straightforward. 

Denote by $L_x$, $L_y$, the linear dimensions of the lattice, so that $N=L_x\times L_y$. We can view the torus as a $1D$ ring with Hilbert space 
\be
\mathcal{H}_{1D} = \bigotimes_{i}\left(\bigotimes_j \mathcal{H}_{i,j}\right) = \bigotimes_{i} \mathcal{H}_i'
\ee
and $\mathcal{H}'_i\simeq \mathbb{C}^{2^{L_y}}$. We will refer to this process of viewing the torus as a $1D$ ring as a ``compactification''.

In the torus viewed as a $1D$ chain, two regions $A_y,B_y$ such as in the middle panel of Fig.~\ref{fig:ttt}, are connected intervals $A'$, $B'$. Consider now the claim~\eqref{eq:statement_of_lemma_1} that for a state $\ket{\psi}$ the mutual information is invariant under a shallow circuit. Since a shallow QC remains shallow under compactification (note the inverse is not true) it is sufficient to prove the following statement: given a $1D$ state $\ket{\psi'}$, and two disjoint, sufficiently separated intervals $A'$, $B'$ such that
\be\label{eq:loc_ortho}
\rho'_{A'B'} = \bigoplus_a |\psi_{a}(y)|^2 \rho^a_{A'}\otimes \rho^a_{B'},
\ee
then for any QC $Q'$ whose depth is much smaller than the distance between $A'$ and $B'$, we have
\be\label{eq:to_prove_intermediate}
I_{A',B'}(\ket{\psi'}) = I_{A',B'}(Q'\ket{\psi'})\,.
\ee
The proof of this statement can be found in Ref.~\cite{korbany2025long}.
\end{proof}

\section{All short-range nonstabilizer states in the toric code}\label{sec:srm_tc}

In this appendix, we prove Theorem~\ref{thrm:srm_tc}. In the following, we denote a ground state of the TC in the computational logical basis by
\begin{align}
\ket{\psi} &= \alpha_1 \ket{00}_L + \alpha_2\ket{10}_L + \alpha_3 \ket{01}_L + \alpha_4 \ket{11}_L.
\end{align}
We will make use of the bases of MESs $\ket{a;\gamma}$ for $\gamma \in \{\gamma_x,\gamma_y,\gamma_{xy}\}$. It will be convenient to express the coefficients $\psi_{a}(\gamma_y) = \braket{\psi|a;\gamma_y}$ in terms of the coefficients $\alpha_i$ according to
\begin{align}\label{eq:basis_trafo_y}
\ket{\psi} &= \frac {1}{\sqrt{2}} (\alpha_1 + \alpha_2) \ket{+, 0}_L + \frac {1}{\sqrt{2}} (\alpha_1 - \alpha_2)\ket{-,0}_L  \nonumber\\ 
+& \frac {1}{\sqrt{2}} (\alpha_3 + \alpha_4) \ket{+,1}_L + \frac {1}{\sqrt{2}} (\alpha_3 - \alpha_4)\ket{-,1}_L \nonumber \\
&= \sum_a \psi_a(\gamma_y)\ket{a;\gamma_y}\,,
\end{align}
while for $\psi_{a}(\gamma_x) = \braket{\psi|a;\gamma_x}$ we have
\begin{align}\label{eq:basis_trafo_x}
\ket{\psi} &= \frac {1}{\sqrt{2}} (\alpha_1 + \alpha_3) \ket{0,+}_L + \frac {1}{\sqrt{2}} (\alpha_1 - \alpha_3)\ket{0,-}_L  \nonumber\\ 
+& \frac {1}{\sqrt{2}} (\alpha_2 + \alpha_4) \ket{1,+}_L + \frac {1}{\sqrt{2}} (\alpha_2 - \alpha_4)\ket{1,-}_L\nonumber \\
&=\sum_a \psi_a(\gamma_x)\ket{a;\gamma_x}.
\end{align}
Finally, for $\gamma_{xy}$:
\begin{align}\label{eq:basis_trafo_xy}
\ket{\psi} &= \frac {1}{\sqrt{2}} (\alpha_1  \!+ \! \alpha_4) \ket{\psi ^{+}}_L  \!+ \!\frac {1}{\sqrt{2}} (\alpha_1 \! -  \!\alpha_4)\ket{\psi ^{-}}_L  \!+ \! \frac {1}{\sqrt{2}} (\alpha_3  \!+  \!\alpha_2) \ket{\phi^{+}}_L  \!+ \! \frac {1}{\sqrt{2}} (\alpha_3  \!- \! \alpha_2)\ket{\phi ^{-}}_L\nonumber \\
&=\sum_a \psi_a(\gamma_xy)\ket{a;\gamma_{xy}}.
  \end{align}

We begin by the following observation:
\begin{lem}\label{lemma_lower_bound_tc}
 Let $\ket{\psi}$ be a ground state of the toric code and denote $|\psi_{a}(\gamma)|^2 = |\braket{\psi|a;\gamma}|^2$, then for distinct  $\gamma$ and $\gamma'$
    \begin{align}
    &I_{A_\gamma,B_\gamma}(\ket{\psi}) + I_{A_{\gamma'},B_{\gamma'}}(\ket{\psi}) 
    = H(\{|\psi_{a}(\gamma)|^2\})  + H(\{|\psi_{a}(\gamma')|^2\}) 
    \geq 2.
    \end{align}
   Furthermore, if the lower bound is saturated, then the distributions $|\psi_{a}(\gamma)|^2$ and $|\psi_{a}(\gamma')|^2$ are flat (all non-zero values are equal to each other).
\end{lem}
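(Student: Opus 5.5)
The plan is to apply the Maassen--Uffink entropic uncertainty relation to the two orthonormal MES bases $\{\ket{a;\gamma}\}_a$ and $\{\ket{b;\gamma'}\}_b$. By Eq.~\eqref{eq:mut_info_Shannon}, the left-hand side equals $H(\{|\psi_a(\gamma)|^2\})+H(\{|\psi_b(\gamma')|^2\})$. So it suffices to show that this sum is at least $-2\logb c$, where
\begin{equation}
c=\max_{a,b}|\braket{a;\gamma|b;\gamma'}|
\end{equation}
is the maximal overlap between the two bases.

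The overlaps come from the explicit change-of-basis formulas, Eqs.~\eqref{eq:basis_trafo_y}, \eqref{eq:basis_trafo_x}, \eqref{eq:basis_trafo_xy}, or equivalently from diagram~\eqref{eq:cd_basis_states_tc}. For each of the three pairs $(\gamma_x,\gamma_y)$, $(\gamma_x,\gamma_{xy})$ and $(\gamma_y,\gamma_{xy})$, the transition matrix is one of $S$, $ST$ or $STS$ up to phases and ordering. One checks directly that every entry of each of these matrices has modulus $1/2$. For $S$ this is immediate. For $ST$ it holds because $T$ is a diagonal phase matrix. For $STS$ it is a short computation, or one can simply read it off the Bell-state expansions: each $\ket{\pm,j}_L$ has overlap of modulus $1/2$ with each Bell state. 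Hence $c=1/2$, and Maassen--Uffink gives the bound $\ge 2$.

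For the saturation statement, I will not rely on the general equality conditions of Maassen--Uffink, which are delicate. Instead, I will argue through Rényi entropies, since the Maassen--Uffink relation is proved from the Riesz--Thorin interpolation bound
\begin{equation}
\|U\psi\|_{p'}\le c^{(2-p)/p}\|\psi\|_p .
\end{equation}
A cleaner route for the equality case is the $\infty$/$1/2$ Rényi version $H_\infty(\gamma)+H_{1/2}(\gamma')\ge 2$. This follows from two elementary inequalities: $|\psi_b(\gamma')|\le \sum_a |U_{ba}||\psi_a(\gamma)|\le \tfrac12\sum_a|\psi_a(\gamma)|$, together with $H\ge H_\infty$ and $H_{1/2}\ge H$. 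Combining these with Maassen--Uffink gives the chain
\begin{equation}
2 = H(\gamma)+H(\gamma') \ge H_\infty(\gamma)+H(\gamma') .
\end{equation}
This is where saturation does its work. Equality in the full chain forces equality in $H(\gamma)=H_\infty(\gamma)$, and the Shannon entropy equals the min-entropy only for a flat distribution. The same argument with the roles of $\gamma$ and $\gamma'$ exchanged gives flatness of the other distribution.

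The main obstacle is making this saturation chain airtight. The issue is that mixing $H_\infty$ and $H_{1/2}$ requires an inequality of the form $H_\infty(\gamma)+H_{1/2}(\gamma')\ge 2$, i.e. $\max_b|\psi_b(\gamma')|^2\le \tfrac14(\sum_a|\psi_a(\gamma)|)^2$. That inequality is exactly the triangle-inequality step above. Combined with $H_{1/2}\ge H\ge H_\infty$ on both sides, it should give the result.

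If this chain turns out not to close, I will fall back on the explicit parametrization. Saturation with integer-valued entropies such as $(0,2)$ or $(1,1)$ can be checked case by case using the formulas $\tfrac{1}{\sqrt2}(\alpha_i\pm\alpha_j)$. This works because a $4\times4$ matrix whose entries all have modulus $1/2$ is tightly constrained.
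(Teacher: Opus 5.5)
Your proof of the inequality itself matches the paper's: Maassen--Uffink with maximal overlap $1/2$, since every entry of $S$, $ST$ and $STS$ has modulus $1/2$. The gap is in the saturation claim. Your chain $2=H(\gamma)+H(\gamma')\ge H_\infty(\gamma)+H(\gamma')$ forces $H(\gamma)=H_\infty(\gamma)$ only if you also have the reverse bound $H_\infty(\gamma)+H(\gamma')\ge 2$. The triangle-inequality estimate gives instead $H_\infty(\gamma)+H_{1/2}(\gamma')\ge 2$. Because $H_{1/2}(\gamma')\ge H(\gamma')$, this bounds the \emph{larger} quantity from below. The ordering $H_\infty\le H\le H_{1/2}$ therefore points the wrong way on the $\gamma'$ side, and the chain cannot close.

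In fact the inequality you need is false. Take $\psi(\gamma)=(\cos\theta,\sin\theta,0,0)$ with $\gamma,\gamma'$ related by $S$. The weights of $\psi(\gamma')$ are $\tfrac14(1\pm\sin 2\theta)$, each appearing twice, so $H(\gamma')=1+h\big(\tfrac{1+\sin 2\theta}{2}\big)$, with $h$ the binary entropy. Meanwhile $H_\infty(\gamma)=-\logb\cos^2\theta$. For $\theta=0.3$ this gives $H_\infty(\gamma)+H(\gamma')\approx 1.89<2$, and the deficit persists for all small $\theta>0$.

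Your fallback does not repair this. The lemma asserts flatness for \emph{every} saturating state. Flat distributions on four outcomes have entropies in $\{0,1,\logb 3,2\}$, so the claim includes the statement that saturation never occurs at a non-integer split such as $H(\gamma)=0.4$, $H(\gamma')=1.6$. Checking integer cases cannot exclude that. Even the integer case $(1,1)$ is not a finite check: $H=1$ is attained by a continuum of non-flat distributions, for example on three outcomes. Yet $(1,1)$ is exactly the case the proof of Theorem~\ref{thrm:srm_tc} relies on, to conclude $(1/2,1/2,0,0)$ up to permutation. The $(0,2)$ case is indeed easy as you say, since a basis vector is mapped to a column of $M$, which is flat.

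The missing ingredient is the equality characterization of the Maassen--Uffink bound: only states whose distributions in both bases are flat can attain it. The paper closes this step by citing that result~\cite{abdelkhalek2015optimality}. You need either to invoke it or to genuinely prove it for these $4\times 4$ matrices, all of whose entries have modulus $1/2$.
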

\begin{proof}
This follows immediately from Eq.~\eqref{eq:mut_info_Shannon} and the Maassen–Uffink bound~\cite{maassen1988generalized,coles2017entropic}, which we recall. For two orthonormal bases $
\ket{e_i}$ and $\ket{\widehat{e}_i}$ and a normalized state $\ket{\psi}$ define $c_i = \braket{e_i|\psi}$ and $\widehat{c}_i = \braket{\widehat{e}_i|\psi}$. Then 
\begin{align}
H(\{|c_i|^2\}) + H(\{|\widehat{c}_i|^2\}) \geq -2\log_2(F)\,,
\end{align}
where
\begin{equation}
 F = \max_{i,j} |\braket{e_i|\widehat{e}_j}|\leq 1.
\end{equation}
It is straightforward to apply this bound to our case. Indeed, here the change of basis is one of the modular matrices $M\in \{S,ST,STS\}$, which have entries $|M_{ab}| = \frac{1}{2}$. Finally, it is known that only flat distributions can saturate the bound \cite{abdelkhalek2015optimality}.
\end{proof}

\begin{proof}[Proof of Theorem \ref{thrm:srm_tc}]
If $\ket{\psi}$ is a stabilizer state, then for all $\gamma$ the mutual information
\be 
I_\gamma :=I_{A_\gamma,B_\gamma}[|\psi\rangle] 
\ee 
takes integer values. Thus, we only need to prove the converse statement. To this end, we will construct all solutions of $I_\gamma \in \mathbb{N}$ for all $\gamma \in \{\gamma_x,\gamma_y,\gamma_{xy}\}$ and show that all such solutions correspond to stabilizer states.

First, note that by dimensionality $I_\gamma \leq 2$, so the integer values can only be $0,1$ and $2$. We will find all solutions, by going through all possibilities of the integer values the mutual information can take.
\begin{itemize}
\item
First note that if there is one $\gamma$ such that $I_\gamma = 0$, then $\ket{\psi} = \ket{a;\gamma}$ is a MES, which is a stabilizer state by construction. While we already know that the mutual information must then take integer values in the other two bases, note that Lemma~\ref{lemma_lower_bound_tc} implies that in the other two bases $I_{\gamma'}=I_{\gamma''} =2$. Here and in the following we always denote three mutually different curves by $\gamma$, $\gamma^\prime$ and $\gamma''$.
\item Next, suppose that there are at least two bases such that $I_\gamma = I_{\gamma'}=1$. Then, Lemma \ref{lemma_lower_bound_tc} implies that up to a permutation
    \begin{align} \label{eq:flat_distr}
    \{|\psi_{a}(\gamma)|^2\} &= (1/2,1/2,0,0)
    \end{align}
    and similarly for $\gamma'$. Without loss of generality, let us assume that $\gamma=\gamma_y$ and $\gamma'=\gamma_x$, corresponding to the basis associated with the decompositions~\eqref{eq:basis_trafo_y} and~\eqref{eq:basis_trafo_x} (the other two cases are treated in the same way). First, suppose that in Eq.~\eqref{eq:basis_trafo_y} it is $|\alpha_1+\alpha_2|=|\alpha_1-\alpha_2|=0$. Then, $\alpha_1=\alpha_2=0$ and from Eq.~\eqref{eq:basis_trafo_x} we see that  and it must be either $\alpha_3=0$ and $\alpha_4\neq 0$ or $\alpha_3\neq 0$ and $\alpha_4=0$. In both cases, $\ket{\psi}$ is a product state, and clearly an encoded stabilizer state. Similarly, one can treat the case $|\alpha_3+\alpha_4|=|\alpha_3-\alpha_4|=0$. The only case left corresponds to setting $|\alpha_1-s\alpha_2|=0$ and $|\alpha_3-t\alpha_4|=0$ for $s,t=\pm 1$, which gives us  $\alpha_1 =s\alpha_2$, $\alpha_3=t\alpha_4$. Using, Eq.~\eqref{eq:basis_trafo_x} we also obtain $\alpha_2=u\alpha_4$, where $u=\pm 1$, so that putting all together, we get
\begin{equation}
\ket{\psi} \propto\left(us\ket{00}+s\ket{10} + t\ket{01} + \ket{11}\right)
\end{equation}

It is now straightforward to check that $\ket{\psi}$ is an encoded stabilizer state (and therefore a stabilizer state) for all choices of the signs $u,s$, and $t$. 
\end{itemize}
So far we have found that states such that (i) one $I_\gamma=0$, or (ii) at least two different $\gamma,\gamma'$ such that $I_\gamma = I_{\gamma'}=1$, are stabilizer states and their mutual information takes integer values in all three bases. Thus, we are left with the cases where the integer values are all two (i.e.~maximal), or $I_\gamma = 1$ and $I_{\gamma'}=I_{\gamma''}=2$. We discuss those two cases next.
\begin{itemize}
\item
Assume $I_\gamma = 2$ for all $\gamma$.
This leads to a contradiction, as follows. Using Eqs.~\eqref{eq:basis_trafo_y},~\eqref{eq:basis_trafo_x}, and~\eqref{eq:basis_trafo_xy}, the condition of maximal entropy in all bases implies that for all $i\neq j$
\be\label{eq:cond_max_entropy}
|\alpha_i \pm \alpha_j| = \frac{1}{\sqrt{2}}.
\ee
In turn, Eq.~\eqref{eq:cond_max_entropy} implies that $\{\alpha_j\}_{j=1}^4$ are four equidistant points in the complex plane. This is impossible, since the maximum number of mutually equidistant points in the plane is three (forming an equilateral triangle).

\item 
We are left with the case where there is exactly one $I_\gamma = 1$, while $I_{\gamma'} =  I_{\gamma''} =2$. This is for example the case for 
\begin{align}
    \ket{\psi} &= \frac 12\left(\ket{00}_L + i\ket{01}_L + i\ket{10}_L + \ket{11}_L\right) \nonumber \\&= \frac{1}{\sqrt{2}}\left(\ket{\psi^+}_L+ i\ket{\phi^+}_L\right),
\end{align}
where we use the notation as in Eq.~\eqref{eq:basis_trafo_xy}. This state has stabilizers $\overline{Y}_1,\overline{Y}_2$.

In Lemma \ref{lem:tc_constraints} we show that the conditions can only be satisfied for uniform distributions in the computational basis of the form
\begin{align}
    \alpha_1 &= \exp(-i\phi_2) \alpha_2 = \exp(-i\phi_3) \alpha_3 = \exp(-i\phi_4) \alpha_4 = \frac{1}{2}
\end{align}
and the relative phases are constrained to be such that $\exp(i\phi_j) = \pm 1 $ or $\exp(i\phi_j) = \pm i$. Furthermore, the phases are such that exactly two factors are imaginary. We have checked, by means of a simple computer code, that all allowed states are stabilizer states.
\end{itemize}
If we count all the states appearing in the different cases as solutions of $I_\gamma \in \mathbb{N}$ for all $\gamma$, there are $60$ different states, which exhausts all possible two qubit stabilizer states.
\end{proof}

\begin{lem}\label{lem:tc_constraints}
    Let 
    \begin{align}
\ket{\psi} &= \alpha_1 \ket{00}_L + \alpha_2\ket{10}_L + \alpha_3 \ket{01}_L + \alpha_4 \ket{11}_L,
\end{align}
be a ground state of the TC and assume $I_\gamma = I_{\gamma'} = 2$ and $I_{\gamma''} =1$. Then, $|\alpha_i|=\frac{1}{2}$ and the relative phases are such that two $\alpha_i$ are real, and two $\alpha_i$ are purely imaginary.
\end{lem}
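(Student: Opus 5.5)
Throughout, I use the explicit changes of basis in Eqs.~\eqref{eq:basis_trafo_y}, \eqref{eq:basis_trafo_x}, \eqref{eq:basis_trafo_xy} rather than the Maassen--Uffink bound. The key observation is that each basis pairs the four indices differently:
\begin{itemize}
\item $\gamma_y$ pairs $\{1,2\},\{3,4\}$;
\item $\gamma_x$ pairs $\{1,3\},\{2,4\}$;
\item $\gamma_{xy}$ pairs $\{1,4\},\{2,3\}$.
\end{itemize}
In each case the MES weights are $\tfrac12|\alpha_i\pm\alpha_j|^2$ over the pairs $\{i,j\}$. Relabelling logical basis states (e.g.\ swapping the roles of $\alpha_2,\alpha_4$ and $\alpha_3$) permutes these three pairings. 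So without loss of generality I take $\gamma''=\gamma_{xy}$ and $\gamma=\gamma_y$, $\gamma'=\gamma_x$, and treat this case only.

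\emph{Step 1 (maximal entropy in two bases).} $I_\gamma=2$ forces the four weights to be uniform, i.e.\ $\tfrac12|\alpha_i\pm\alpha_j|^2=\tfrac14$ for each pair. Expanding $|\alpha_i\pm\alpha_j|^2=|\alpha_i|^2+|\alpha_j|^2\pm2\,\mathrm{Re}(\alpha_i\alpha_j^*)$, this is equivalent to
\be
|\alpha_i|^2+|\alpha_j|^2=\tfrac12,\qquad \mathrm{Re}(\alpha_i\alpha_j^*)=0 .
\ee
Apply this to the pairs $\{1,2\},\{3,4\}$ (from $\gamma_y$) and $\{1,3\},\{2,4\}$ (from $\gamma_x$). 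The norm conditions give $|\alpha_2|=|\alpha_3|=:b$ and $|\alpha_1|=|\alpha_4|=:a$, with $a^2+b^2=\tfrac12$.

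\emph{Step 2 (phases).} Fix the global phase so that $\alpha_1=a\geq0$. If $a>0$, the orthogonality conditions $\mathrm{Re}(\alpha_1\alpha_2^*)=\mathrm{Re}(\alpha_1\alpha_3^*)=0$ make $\alpha_2,\alpha_3$ purely imaginary. Then $\mathrm{Re}(\alpha_2\alpha_4^*)=0$ with $b>0$ makes $\alpha_4$ real. Hence
\be
\alpha=(a,\ i s b,\ i t b,\ u a),\qquad s,t,u=\pm1 .
\ee
(If $b=0$, then $a^2=\tfrac12$ and the same conclusion on phases is immediate.)

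\emph{Step 3 (entropy one in the third basis).} Substituting into Eq.~\eqref{eq:basis_trafo_xy}, the weights are $\tfrac12a^2(1\pm u)^2$ and $\tfrac12b^2(s\pm t)^2$. This distribution is $(2a^2,2b^2,0,0)$ up to ordering. Its entropy is the binary entropy $h(2a^2)$, which equals $1$ iff $2a^2=\tfrac12$, i.e.\ $a=b=\tfrac12$. This gives $|\alpha_i|=\tfrac12$, with $\alpha_1,\alpha_4$ real and $\alpha_2,\alpha_3$ purely imaginary, as claimed.

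\emph{Main obstacle: the degenerate case $a=0$.} Steps 1--3 assume $a>0$. If $a=0$, then $\alpha_1=\alpha_4=0$ and $|\alpha_2|=|\alpha_3|=1/\sqrt2$. Steps 1--2 then leave the relative phase of $\alpha_2,\alpha_3$ free, and $I_{\gamma_{xy}}=1$ forces $\mathrm{Re}(\alpha_2\alpha_3^*)=0$. For example, $(0,\tfrac1{\sqrt2},\tfrac{i}{\sqrt2},0)$ satisfies all three entropy conditions, but its amplitudes do not have modulus $\tfrac12$. The lemma as stated therefore needs $a\,b\neq0$ as a hypothesis, or this branch must be treated separately. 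I would handle it directly: such a state is, up to phase, $\tfrac1{\sqrt2}(\ket{10}_L+ i\ket{01}_L)$ or the analogue with $-i$, which is an encoded stabilizer state (stabilized by $-\overline{Z}_1\overline{Z}_2$ and $\overline{Y}_1\overline{X}_2$). The conclusion needed for Theorem~\ref{thrm:srm_tc} therefore still holds.
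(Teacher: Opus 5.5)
Your Steps 1--3 reproduce the paper's argument. Uniform weights for $\gamma_x,\gamma_y$ give $r_1=r_4$, $r_2=r_3$ and pairwise orthogonality. Fixing $\phi_1=0$ then makes $\alpha_2,\alpha_3$ imaginary and $\alpha_4$ real. Finally, the $\gamma_{xy}$ weights $(2a^2,2b^2,0,0)$ have entropy one only at $a=b=\tfrac12$.

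Writing the conditions as $\mathrm{Re}(\alpha_i\alpha_j^*)=0$, rather than the paper's $\cos(\phi_i-\phi_j)=0$, is what exposes the real issue. The cosine conditions only follow when $r_ir_j\neq0$. The paper's parametrization $\alpha_2=\pm i r_2$, $\alpha_3=\pm i r_3$ silently forces $\alpha_2/\alpha_3$ to be real even when $\alpha_1=\alpha_4=0$. Your counterexample $\tfrac{1}{\sqrt2}(\ket{10}_L+i\ket{01}_L)$ is correct: its weights are $\tfrac14$ in the $\gamma_x$ and $\gamma_y$ bases and $(0,0,\tfrac12,\tfrac12)$ in the $\gamma_{xy}$ basis. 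So the lemma is false as stated, and the paper's proof has a gap exactly there.

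Your own handling of the degenerate branches is incomplete, however. The parenthetical ``If $b=0$ \dots\ the same conclusion on phases is immediate'' fails. The only conditions that made $\alpha_4$ real were $\mathrm{Re}(\alpha_2\alpha_4^*)=\mathrm{Re}(\alpha_3\alpha_4^*)=0$, and these are vacuous when $\alpha_2=\alpha_3=0$. So after Step 2 you have $\alpha_4=e^{i\theta}/\sqrt2$ with $\theta$ free. Step 3 then imposes $\mathrm{Re}(\alpha_1\alpha_4^*)=0$, which yields $\tfrac{1}{\sqrt2}(\ket{00}_L\pm i\ket{11}_L)$: two further counterexamples. This branch is the image of your $a=0$ branch under the relabelling $(12)(34)$, which preserves all three pairings. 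Normalising $\alpha_1\geq0$ broke that symmetry and hid the second branch.

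The correct statement, for $\gamma''=\gamma_{xy}$, is this: either the lemma's conclusion holds, or $\ket{\psi}$ is, up to phase, one of the four stabilizer states of the group generated by $\pm\overline{Z}_1\overline{Z}_2$ and $\pm\overline{X}_1\overline{Y}_2$. Theorem~\ref{thrm:srm_tc} therefore survives. These exceptional states are in fact needed for the paper's count. The other cases contribute $12+12$ states and the nondegenerate solutions give $8$ per choice of $\gamma''$, totalling $48$; adding the four exceptional states per $\gamma''$ gives $12+12+3\cdot12=60$.

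A minor slip: with $\overline{Y}=i\overline{X}\,\overline{Z}$, the operator $\overline{Y}_1\overline{X}_2$ has eigenvalue $-1$ on $\tfrac{1}{\sqrt2}(\ket{10}_L+i\ket{01}_L)$. Its stabilizers are $-\overline{Z}_1\overline{Z}_2$ and $\overline{X}_1\overline{Y}_2$ (equivalently $-\overline{Y}_1\overline{X}_2$).
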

\begin{proof}
Without loss of generality consider $I_{\gamma_x}= I_{\gamma_y} = 2$, i.e. the coefficients $|\braket{a_\gamma|\psi}|^2$ for $\gamma =\gamma_x,\gamma_y$ are uniform which implies
\begin{align}
&|\alpha_1 \pm \alpha_2|^2 = |\alpha_3 \pm \alpha_4|^2 = \frac{1}{2}, \\      
&|\alpha_1 \pm \alpha_3|^2 = |\alpha_2 \pm \alpha_4|^2 = \frac{1}{2}.
\end{align}
Let $\alpha_i = r_i\exp(\phi_i)$. Then those equations imply
\begin{align}\label{eq:constraints_amplitudes}
    &r_1^2 + r_2^2 = r_3^2 + r_4^2 =  \frac{1}{2}, \\
    &r_1^2 + r_3^2 = r_2^2 + r_4^2 =  \frac{1}{2},
\end{align}
and
\begin{align}\label{eq:constraints_phases}&\cos(\phi_1 -\phi_2) = \cos(\phi_3 -\phi_4) = 0, \\ 
    &\cos(\phi_1 -\phi_3) = \cos(\phi_2 -\phi_4) = 0 .
\end{align}
Without loss of generality consider $\phi_1 =0$. Then, we have
\begin{align}
    \alpha_1 &= r_1, \quad \alpha_3 = \pm ir_3, \\
    \alpha_2 &= \pm ir_2, \quad \alpha_4 = \pm r_4. 
\end{align}
Therefore,
\begin{align}
   &p_{\pm} = \frac{1}{2}|\alpha_1 \pm \alpha_4|^2 = \frac{1}{2}\left( r_1^2 + r_4^2 \pm 2r_1r_4\right), \\
   &q_{\pm} = \frac{1}{2}|\alpha_2 \pm \alpha_3|^2 = \frac{1}{2}\left( r_2^2 + r_3^2 \pm 2r_2r_3\right).
\end{align}
Consider the additional constraint (from $I_{\gamma_{xy}}$ =1) that
\begin{equation}\label{eq:constraint_1}
    -\sum_l p_l\log_2(p_l) = 1\,,
\end{equation}
where $\{p_l\}_{l} = \{p_+,p_-,q_+,q_-\}$. From \eqref{eq:constraints_amplitudes} we have that $r_2 = r_3$ and $r_1 = r_4$. Thus, $p_- = q_- =0$. Hence, the only solution of \eqref{eq:constraint_1} is $p_+ = q_+ = \frac{1}{2}$, which implies $r_i = \frac{1}{2}$ for all $i$.
\end{proof}

\section{Technical proofs}
\label{sec:technical details}
In this section, we prove Eq.~\eqref{eq:decomposition_to_prove_1}. First, let us consider a block decomposition for the Pauli string $P$ 
\begin{equation}\label{eq:block_decomposition}
    P=\begin{pmatrix}
        P\big|_{\rm GS} & B \\
        C & D
    \end{pmatrix}\,.
\end{equation}
Here we denoted by $P\big|_{\rm GS}$ the restriction of $P$ to the ground space. Given a ground state $\ket{\psi}$, we have
\begin{equation}
    P\ket{\psi}= P\big|_{\rm GS}\ket{\psi}+C\ket{\psi}\,.
\end{equation}
By construction, $C\ket{\psi}$ is a sum of excited states. Next, we claim
\begin{equation}\label{eq:commutation_restricted}
W_a(\gamma_y)    P\big|_{\rm GS}\ket{\psi}=P\big|_{\rm GS} W_a(\gamma_y)\ket{\psi}\,,
\end{equation}
for all $a\in \mathcal{A}$. To see this, we note that, since $P$ lies in the region $A_{\gamma_y}$, we can find a curve $\gamma_y'$ that is topologically equivalent to $\gamma_y$ and sufficiently distant, such that
\begin{equation}\label{eq:intermediate_eq}
P W_{a}(\gamma_y)\ket{\psi}=P W_{a}(\gamma_y')\ket{\psi}=W_{a}(\gamma_y')P\ket{\psi}\,,
\end{equation}
because the supports of $W_{a}(\gamma_y')$ and $P$ are disjoints. Here we used that the action of $W_{a}(\gamma_y)$ and $W_{a}(\gamma_y')$ coincide on the ground state. The l.h.s. of Eq.~\eqref{eq:intermediate_eq} can be rewritten as
\begin{align}\label{eq:lhs}
    P W_{a}(\gamma_y)\ket{\psi}&=P\big|_{\rm GS}W_{a}(\gamma_y)\ket{\psi}+CW_{a}(\gamma_y)\ket{\psi}\nonumber\\
    &=P\big|_{\rm GS}W_{a}(\gamma_y)\ket{\psi}+\ket{v_{\rm ex}}\,,
\end{align}
where $\ket{v_{\rm ex}}$ is a sum of excited states. The r.h.s. of Eq.~\eqref{eq:intermediate_eq} can be rewritten as
\begin{equation}
    W_{a}(\gamma_y')P\ket{\psi}=    W_{a}(\gamma_y')P\big|_{\rm GS}\ket{\psi}+ W_{a}(\gamma_y')C\ket{\psi}\,.
\end{equation}
In the string-net models, $W_a{(\gamma)}$ commutes with the Hamiltonian for any closed loop $\gamma$ (note that this is not true in general~\cite{cian2022extracting}), so that $W_a{(\gamma'_y)}$ is block diagonal with respect to the block decomposition~\eqref{eq:block_decomposition}. Therefore, $W_{a}(\gamma_y')C\ket{\psi}=\ket{w}_{\rm ex}$ is a sum of excited states, and we have
\begin{align}
\label{eq:rhs}
    W_{a}(\gamma_y')P\ket{\psi}&=    W_{a}(\gamma_y')P\big|_{\rm GS}\ket{\psi}+\ket{w}_{\rm ex}\nonumber\\
    &=  W_{a}(\gamma_y)P\big|_{\rm GS}\ket{\psi}+\ket{w}_{\rm ex}.
\end{align}
Equating~\eqref{eq:lhs} with~\eqref{eq:rhs}, and projecting onto the ground space we obtain~\eqref{eq:commutation_restricted}.

Eq.~\eqref{eq:commutation_restricted} tells us that $P\big|_{\rm GS}$ commutes with $W_{a}(\gamma)$ for all $a\in \mathcal{A}$. Because each common eigenstate of $\{W_{a}(\gamma)\}$ is uniquely identified by the corresponding set of eigenvalues, and since products of the form $W_{a}(\gamma)W_{b}(\gamma)$ can be written as a linear combination of the WLO $W_{a}(\gamma)$, it is easy to show that
\begin{equation}
P\big|_{\rm GS} =\sum_{d}\alpha_d W_d(\gamma_y)\big|_{\rm GS}\,,
\end{equation}
which proves Eq.~\eqref{eq:decomposition_to_prove_1}.

\section{Mutual information and lattice modular transformation}
\label{sec:proof_mutual_info_change_stringnets}

In this section, we prove Theorem~\ref{thrm:classification_string_nets}. To this end, we refine the argument lay out in Sec.~\ref{sec:full_classification_string_nets}, starting from Eq.~\eqref{eq:correct_implementation}. 
As in the main text, we consider $N_a$ additional ancillas. Let us fix the regions $A$ and $B$ containing only physical qubits, and denote by $C$ the complement of $A\cup B$ containing only physical qubits. For each region $R$ denote with $R^a$ the union of $R$ and all the ancillas which are closest to $R$. (Note that in the construction~\cite{zhu2020instantaneous} the ancillas are added homogeneously to the faces of the honeycomb lattice)

It is sufficient to prove the theorem for a Dehn twists such as in Fig.~\ref{fig:dehn_twist_lattice}, as Dehn twist along different axis generate all modular transformations. Thus, in the following one can consider for concreteness a permutation $\mathcal{P}_\zeta$ of the physical and ancilla qubits, similar to the shear transformation in Fig.~\ref{fig:dehn_twist_lattice}.

Let us defines the permutations of the regions $R^a$ as
\begin{align}
    \mathcal{P}^{-1}_\zeta: A^a\to \tilde{A}^a\,,\\
        \mathcal{P}^{-1}_\zeta: B^a\to \tilde{B}^a\,,\\
            \mathcal{P}^{-1}_\zeta: C^a\to \tilde{C}^a\,.
\end{align}
Note also that $\tilde C^a$ is obviously the complement of $\tilde A^a\cup \tilde B^a$ and because $P_\zeta$ is a shear transformation, the distance between $\tilde{A}^a$ and $\tilde{B}^a$ is of the same order as that of $A^a$ and $B^a$.

Analogously to the main text, we have
\begin{align}
\rho_{\tilde A^a\tilde B^a}&={\rm Tr}_{\tilde {C}^a}\left[\ket{\psi}\bra{\psi}\otimes (|0\rangle\langle 0|)^{\otimes N_a}\right]\nonumber\\
&= V\left({\rm Tr}_{C^a}[\mathcal{U}^\dagger_\zeta \ket{\varphi_\zeta}\bra{\varphi_\zeta}\mathcal{U}_{\zeta}]\right)V^\dagger\,.
\end{align}
Here, $V =V_{\tilde{A}^a\leftarrow A^a}\otimes V_{\tilde{B}^a\leftarrow B^a}$ is an isometry \be V:\HH_{A^a}\otimes \HH_{B^a} \to \HH_{\tilde{A}^a}\otimes \HH_{\tilde{B}^a}\ee with $\HH_{R^a}$ denoting the Hilbert space on the region $R^a$, while we used Eq.~\eqref{eq:correct_implementation} and defined \be \ket{\varphi_\zeta}=\mathcal{U}_\zeta \mathcal{P}_\zeta \left(|\psi\rangle\otimes \ket{0}^{N_a}\right) = \left(M_\zeta \ket{\psi}\right)\otimes \ket{0}^{N_a}.\ee

Therefore, 
\begin{equation}
    I_{\tilde{A}^a, \tilde{B}^a}\left(\ket{\psi}\otimes \ket{0}^{N_a}\right)= I_{A^a,B^a}\left(\mathcal{U}^\dagger_\zeta\ket{\varphi_{\zeta}}\right)\,.
\end{equation}
On the l.h.s. the ancillas are in a product state so that all entropies factorize and we can restrict to the physical qubits:
\be
    I_{\tilde{A}^a, \tilde{B}^a}\left(\ket{\psi} \otimes \ket{0}^{N_a}\right)= I_{\tilde{A}, \tilde{B}}\left(\ket{\psi}\right).
\ee
The r.h.s. is the mutual information of the state obtained by applying a shallow-depth quantum circuit to $\ket{\varphi_{\zeta}} $. It is immediate from the proof of Lemma~\ref{lemma:mut_info_invariance} that, because $\ket{\varphi_{\zeta}}$ is a ground state and $A^a$, $B^a$ are sufficiently separated regions, the mutual information of $\ket{\varphi_{\zeta}}$ is invariant under shallow-depth quantum circuits. Therefore,
\begin{align}
    I_{\tilde{A}, \tilde{B}}(\ket{\psi})&= I_{A^a,B^a}(\mathcal{U}^\dagger_\zeta\ket{\varphi_{\zeta}})\nonumber\\
    &=I_{A^a,B^a}(\ket{\varphi_{\zeta}})=I_{A,B}(M_\zeta \ket{\psi})\,,
\end{align}
where the last step is as for the l.h.s. This concludes the proof. 

\bibliographystyle{jsty3-author}
\bibliography{bib}

\end{document}